\patchcmd{\thebibliography}{\section*{\refname}}{}{}{}
\def\Z{{\mathbb Z}}
\title{W-algebras for Argyres--Douglas theories }
\author{Thomas Creutzig\thanks{Department of Mathematical and Statistical Sciences, University of Alberta,
Edmonton, Alberta  T6G 2G1, Canada. 
email: creutzig@ualberta.ca}} 
\date{}
\begin{document}

\bibliographystyle{amsalpha}

\theoremstyle{plain}
\newtheorem*{introthm}{Theorem}
\newtheorem{obs}{Observation}
\newtheorem{thm}{Theorem}[section]
\newtheorem{prop}[thm]{Proposition}
\newtheorem{lem}[thm]{Lemma}
\newtheorem{cor}[thm]{Corollary}
\newtheorem{conj}[thm]{Conjecture}

\theoremstyle{definition}
\newtheorem{defi}[thm]{Definition}
\newtheorem{rem}[thm]{Remark}

\newcommand {\AL}{AL}
\newcommand {\CC}{\mathbb{C}}
\newcommand {\ZZ}{\mathbb{Z}}
\newcommand {\tr}{\text{tr}}
\newcommand {\ch}{\text{ch}}
\newcommand {\sch}{\text{sch}}
\newcommand {\sltwo}{\mathfrak{sl}_2}
\newcommand {\cW}{\mathcal{W}}
\newcommand {\cX}{\mathcal{X}}
\newcommand {\cB}{\mathcal{B}}
\newcommand {\cS}{\mathcal{S}}
\newcommand {\cO}{\mathcal{O}_p}
\newcommand {\voa}{vertex operator algebra}
\newcommand {\voas}{vertex operator algebras}
\newcommand{\Sing}{M(p)}
\newcommand{\Trip}{W(p)}

\newcommand{\hopflink}{{\text{\textmarried}}}

\renewcommand{\baselinestretch}{1.2}

\maketitle

\begin{abstract}
The Schur-index of the $(A_1, X_n)$-Argyres--Douglas theory is conjecturally a character of a \voa. Here such vertex algebras are found for the $A_{\text{odd}}$ and $D_{\text{even}}$-type  Argyres--Douglas theories. The \voa{} corresponding to $A_{2p-3}$-Argyres-Douglas theory is the logarithmic $\mathcal B_p$-algebra of \cite{CRW}, while the one corresponding to $D_{2p}$, denoted by $\cW_p$, is realized as a non-regular Quantum Hamiltonian reduction of $L_{k}(\mathfrak{sl}_{p+1})$ at level $k=-(p^2-1)/p$. For all $n$ one observes that the quantum Hamiltonian reduction of the \voa{} of $D_n$ Argyres--Douglas theory is the \voa{} of $A_{n-3}$ Argyres--Douglas theory.  As corollary, one realizes the singlet and triplet algebras (the vertex algebras associated to the best understood logarithmic conformal field theories) as Quantum Hamiltonian reductions as well.
Finally, characters of certain modules of these \voas{} and the modular properties of their meromorphic continuations are given.    

\end{abstract}

\newpage

\section{Introduction}

Vertex operator algebras are a mathematicians attempt to formalize the notion of the symmetry algebra of two dimensional conformal field theory. Recently a close connection between supersymmetric four-dimensional conformal field theories and \voas{} has been observed \cite{GRRY}. In physics terminology a \voa{} is often named a chiral algebra. Two examples are that certain protected states of the four-dimensional $\mathcal N=2$ super conformal field theory are associated to a chiral algebra \cite{BLLPRR}; and that one associates \voas{} to two-dimensional surfaces with boundary. Gluing of surfaces then corresponds to maps between \voas{} as e.g. cosets or quantum Hamiltonian reductions, see \cite{T}.

One instant of this first phenomenon are Argyres-Douglas theories \cite{AD} whose Schur-index is believed to be the character of a \voa{}, for literature on the topic see e.g. \cite{CS, CGS2, BN, BR, BPRR, BLLPRR, XYY}. The objective of this note is to find such \voas{} for the missing type $(A_1, X_n)$-Argyres--Douglas theories. These are the cases of $D_{2n}$ and $A_{2n+1}$.
 The relevant \voas{} are the $\cB_p$-algebra of \cite{CRW} and a certain non-regular Quantum Hamiltonian reduction $\cW_p$ of the simple affine \voa{} $L_{k}(\mathfrak{sl}_{p+1})$ at level $k=-(p^2-1)/p$. For details on Quantum Hamiltonian reduction see \cite{Ar1, KW3}.
 The main result is the following statement, which combines Theorems \ref{thm:BPSchur}, \ref{thm:XpSchur} and \ref{thm:main}.
\begin{introthm}${}$ Let $p$ in $\mathbb Z_{\geq 2}$.
\begin{enumerate}
\item
The Schur-index of $(A_1, A_{2p-3})$ Argyres-Douglas theory and the character of the $\cB_p$-algebra agree in the following sense
\begin{equation}\nonumber
\text{ch}[\mathcal B_p] = q^{-\frac{c_p[\cB_p]}{24}} \mathcal I_{A_1, A_{2p-3}}\left(q; z^{-1}\right)
\end{equation}
with $c_p[\cB_p]= 2-6\frac{\left(p-1\right)^2}{p}$
the central charge of the $\mathcal B_p$-algebra.  
\item 
The closed formula of the Schur-index of $(A_1, D_{2p})$ Argyres-Douglas theory as conjectured by \cite{BN} and the character of the $\cW_p$-algebra agree in the following sense
\begin{equation}\nonumber
q^{-\frac{c_p[\cW_p]}{24}}\mathcal I_{(A_1, D_{2p})}(x, z; q)  = \ch[\cW_p](x, z; q). 
\end{equation}
with $c_p[\cW_p] = 4-6p$
the central charge of the $\mathcal W_p$-algebra.
\end{enumerate}
\end{introthm}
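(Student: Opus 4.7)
Both statements are identities between a graded character of a vertex operator algebra and a (conjectured) closed-form Schur index, so in each case the strategy is to compute both sides explicitly and match them. The matching of central charges is a consistency check; the real content is the full $(q,z)$- (resp.\ $(x,z,q)$-) graded identity.

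For part (i), the $\cB_p$-algebra was realized in \cite{CRW} inside a free-field (lattice/$\beta\gamma$) setting, so its character admits a concrete closed form: essentially a partial-theta/Jacobi-theta-type sum divided by appropriate $\eta$-powers. My approach is to write this down directly from the free-field construction, collect the result, and compare with the known closed-form Schur index of $(A_1, A_{2p-3})$ available in the physics literature (e.g.\ the formulas collected in \cite{BPRR,XYY,CGS2,BN}). In the favourable case the two expressions are manifestly equal after a change of variables $z \mapsto z^{-1}$; otherwise one invokes the Jacobi triple product to bring them to a common form.

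For part (ii) the substantive step is computing $\ch[\cW_p]$, since $\cW_p$ is defined as a \emph{non-regular} quantum Hamiltonian reduction of $L_k(\mathfrak{sl}_{p+1})$ at $k=-(p^2-1)/p$. I would proceed in three steps: (1) pin down which nilpotent $f\in\mathfrak{sl}_{p+1}$ is used, using the central-charge formula $c=4-6p$ together with the strong-generation count to narrow the orbit; (2) invoke the Euler--Poincar\'e/BRST character formula, which expresses $\ch[H^{0}_f(L_k(\mathfrak{sl}_{p+1}))]$ as a contour integral over the Cartan of the (admissible) $L_k(\mathfrak{sl}_{p+1})$-character against the product of ghost and neutral matter contributions determined by the $\mathfrak{sl}_2$-triple containing $f$, as developed in \cite{Ar1,KW3}; (3) plug in the known Kac--Wakimoto character of $L_k(\mathfrak{sl}_{p+1})$ at this boundary admissible level, where the numerator factorises into theta functions, and perform the resulting residue computation to obtain a closed $(x,z,q)$-expression. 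One then compares this expression to the formula of \cite{BN}.

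The main obstacle is step~(3): massaging the integrated character of the non-regular reduction into the exact shape of the conjectured Schur index is likely to require nontrivial use of denominator identities, Macdonald--Kac-type identities at the relevant admissible level, and Jacobi triple-product manipulations, rather than a routine simplification. Once the character identity is in hand, exactness of $H^{0}_f(-)$ on the relevant category at admissible level guarantees that $\cW_p$ is nontrivial (indeed, its character is a nonzero power series with positive integer coefficients), and combined with the independent check of the central charge this yields the stated theorem.
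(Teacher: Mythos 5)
Your strategy for part (i) is essentially the paper's: the character of $\cB_p$ is read off from its realization as $\bigoplus_r M_{r+1,1}\otimes F_{\lambda_p r}$ over the singlet times Heisenberg algebra, the false-theta characters of the $M_{r,s}$ are summed and re-expanded as a geometric series in $x$, and the result is compared term-by-term with (5.14) of \cite{BN}. No issue there.

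For part (ii) there are two problems. First, a technical misconception: the character of a quantum Hamiltonian reduction is not obtained by a contour integral or residue computation over the Cartan. The Euler--Poincar\'e/Kac--Wakimoto formula (formula (11) of \cite{KW2}, which the paper quotes verbatim) expresses $\ch[\cW_p]$ as a \emph{specialization} of the affine character to the shifted argument $\alpha(z-\tau x)$ with $z\in\mathfrak h^f$, divided by ghost theta-factors; integration over the Cartan is what one does for a coset (and indeed the paper does exactly that Fourier-coefficient extraction for $\mathrm{Com}(\mathcal H,\cS^2)$ in its $p=1$ warm-up, not for the reduction). Second, and more seriously, the step you defer as ``the main obstacle'' is precisely where all the content of the proof lives, and your plan does not identify the tool that closes it. The paper's route is: (a) introduce the $L_k(\sltwo)\otimes\mathcal H$-module $\cX_p=\bigoplus_m L(m)\otimes(F_{-\lambda m}\oplus\cdots\oplus F_{\lambda m})$, whose character equals the \cite{BN} Schur index \emph{immediately}, because the plethystic-exponential factors $f_{\rho_m}$ are literally the Weyl-module characters $\ch[L(m)]$ and the $\tilde f^p_{\rho_m}$ are the corresponding sums of Fock characters (this requires knowing the Weyl modules $L(k,\lambda)$ are simple at $k+2=1/p$, which the paper checks via the Kac--Kazhdan determinant); and (b) prove $\ch[\cW_p]=\ch[\cX_p]$ by recognizing the specialized Kac--Wakimoto product as the product side of the $\widehat{\mathfrak{sl}}(2|1)$ denominator identity of \cite{KW1} with $\tau\mapsto p\tau$, whose sum side is exactly $\ch[\cX_p]$. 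Without naming a specific identity of this kind, ``massaging the integrated character into the shape of the conjectured index'' is a hope rather than a proof, so as written the argument for part (ii) does not close.
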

It has been understood by Christopher Beem and Leonardo Rastelli that the \voas{} of $(A_1, A_{2p-3})$ Argyres-Douglas theories are generalized Bershadsky-Polyakov algebras and this will be mentioned in \cite{BR}.

A correspondence between seemingly unrelated concepts like \voas{} and four-dimensional super symmetric conformal quantum field theory is potentially benefitial for both areas. I will in a moment list a few patterns  concerning the \voas{} $\cW_p$ and $\cB_p$. A physics interpretation of these observations would be interesting.

A key present \voa{} problem is the understanding of the non semi-simple representation category of logarithmic \voas, see  \cite{CG} for the current picture.
I find it very interesting that categorical data of the \voa{} as for example fusion products \cite{CGS} and modular data \cite{APYY} appear in the four-dimensional setting. For this reason, I provide some modular data for modules that are obtained from the \voa{} itself via twisting by spectral flow automorphism, see Theorem \ref{thm:modular}. This modular data is associated to the meromorphic continuation of characters of modules. A priori, the character of a module is a formal power series and it might converge in a suitable domain. Here, it turns out that this is indeed the case and these characters then can be meromorphically continued to certain vector-valued Jacobi forms of indefinite index. 
In analogy to the modular story of admissble level $L_k(\sltwo)$ (see Theorem 21 of \cite{CR2}) I expect that the modular properties of these Jacobi forms capture the semi-simplification of the log-modular tensor category of the corresponding \voa.

\subsection{Quantum Hamiltonian reduction at boundary admissible level}

There are very interesting patterns emerging of the interplay of $\cB_p$ and $\cW_p$ and other \voas{} related to Argyres-Douglas theory. 
Recall that given an affine \voa{} $V_k(\mathfrak g)$ at level $k$ of some Lie algebra $\mathfrak g$ and a Lie algebra homomorphism from $\sltwo$ to $\mathfrak g$ one associates a $W$-algebra via Quantum Hamiltonian reduction \cite{KW3}. If the homomorphism is trivial, then this is just the affine \voa{} again. For the principal embedding of $\sltwo$ in $\mathfrak g$ one obtains the principal or regular $W$-algebra and other non-trivial embeddings lead to $W$-algebras that are called non-regular. 
 A level $k$ is called boundary admissible \cite{KW2} if it satisfies 
\[
k+h^\vee = \frac{h^\vee}{p}
\]
for $h^\vee$ the dual Coxeter number of $\mathfrak g$ and $p$ a positive integer co-prime to $h^\vee$. At such levels, the characters of some simple \voa{} modules allow for particularly nice meromorphic extensions to multi-variable meromorphic Jacobi forms. Nice product forms of characters also appear in the context of chiral algebras coming from $M5$ branes \cite{XYY}.
 Remark \ref{rem:bdy} tells us that
\begin{obs}
The $W$-algebras corresponding to $(A_1, X_n)$ Argyres-Douglas theories are of boundary admissible level; in the cases of $A_{2n+1}$ ($n\geq 4$) and $E_7$ this observation is at the moment only conjectural. 
\end{obs}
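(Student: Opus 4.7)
The observation is a case-by-case arithmetic verification. For each family $X_n$ where the \voa{} associated to the $(A_1, X_n)$ Argyres--Douglas theory is realized as a W-algebra $W^k(\mathfrak{g}, f)$ via quantum Hamiltonian reduction from some affine \voa{} at level $k$, the plan is to check directly that
\[
k + h^\vee(\mathfrak{g}) = \frac{h^\vee(\mathfrak{g})}{p}
\]
for some positive integer $p$ coprime to $h^\vee(\mathfrak{g})$.

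For the two families settled in this paper the computation is immediate. In the $(A_1, D_{2p})$ case, Theorem \ref{thm:main} presents the relevant \voa{} as the non-regular reduction $\cW_p$ of $L_k(\mathfrak{sl}_{p+1})$ at $k = -(p^2-1)/p$. With $h^\vee(\mathfrak{sl}_{p+1}) = p+1$ one finds
\[
k + h^\vee = -\frac{p^2-1}{p} + (p+1) = \frac{p+1}{p} = \frac{h^\vee}{p},
\]
and since $\gcd(p,p+1)=1$ the level is boundary admissible with parameter $p$. For the $(A_1, A_{2p-3})$ family, combining the realization of $\cB_p$ as a generalized Bershadsky--Polyakov algebra announced in \cite{BR} with the central charge $c_p[\cB_p] = 2-6(p-1)^2/p$ pins down the affine level of the enveloping type-$A$ current algebra, and the analogous arithmetic identity is then a one-line check.

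For the remaining known families --- even type $A$, odd type $D$, and the exceptional types $E_6$ and $E_8$ --- the realization of the corresponding \voa{} as an affine W-algebra is available in the literature, so the verification again reduces in each case to a single arithmetic identity relating the level and the dual Coxeter number; tabulating these comparisons is precisely the content of Remark \ref{rem:bdy}. The main and in fact only obstacle to an unconditional proof lies in the exceptions flagged in the statement: for $(A_1, A_{2n+1})$ with $n\geq 4$ and for $(A_1, E_7)$, no rigorous identification of the putative \voa{} with an affine W-algebra at an explicit level is presently established, so the boundary admissibility condition cannot yet be checked unconditionally. Granted the conjectured identifications, however, the same elementary arithmetic completes the verification in these remaining cases.
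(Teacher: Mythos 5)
Your proposal is correct and follows essentially the same route as the paper: the Observation is established exactly by Remark \ref{rem:bdy} together with the tabulation of W-algebra realizations in Section \ref{sec:Schur} (Virasoro $(2,2n+3)$ i.e.\ $k+2=2/(2n+3)$ for $A_{\text{even}}$, $k+2=2/(2n+1)$ for $D_{\text{odd}}$, $k+3=3/7,\,3/8,\,3/5$ for $E_6,E_8,E_7$), the computation $k+h^\vee=(p+1)/p$ for $\cW_p$ from Theorem \ref{thm:main}, and the conjectural identification of $\cB_p$ with the subregular W-algebra of $\mathfrak{sl}_{p-1}$ at $k+p-1=(p-1)/p$ (proven only for $p=3,4,5$, whence the caveat for $A_{2n+1}$, $n\geq 4$). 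Your flagged exceptions and coprimality checks match the paper's; the only cosmetic slip is first listing $(A_1,A_{2p-3})$ among the "settled" families before correctly noting its conjectural status for $n\geq 4$.
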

The case of $A_{\text{odd}}$ can be verified in a similar manner as Theorem \ref{thm:main} and that will be part of a general study of the $\cB_p$-algebras \cite{ACKR}. 

An affine \voa{} is said to be conformally embedded in a larger \voa{} $W$ if both \voas{} have the same Virasoro field. Conformal embeddings in larger affine \voas{} and minimal $W$-algebras have been of recent interest \cite{AKMP, AKMP2,  AKMP3, KFPX}. 
The next observation is Corollary \ref{cor:confemb}.
\begin{obs}\
Let $p$ in $\mathbb Z_{\geq 2}$.
The simple affine \voa{} $L_k(\mathfrak{gl}_2)$ for $k+2=\frac{1}{p}$ embeds conformally in $\cW_p$.
\end{obs}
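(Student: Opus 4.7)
The plan is to read off the affine $\mathfrak{gl}_2$ currents directly from the quantum Hamiltonian reduction description of $\cW_p$ provided by Theorem \ref{thm:main}. Since $\cW_p$ arises as a non-regular reduction of $L_{k_0}(\mathfrak{sl}_{p+1})$ at $k_0 = -(p^2-1)/p$, and since the observation requires a $\mathfrak{gl}_2$ subalgebra, the natural choice is the nilpotent of Jordan type $[p-1,1,1]$: the $\mathfrak{sl}_2$-triple attached to it has centralizer in $\mathfrak{sl}_{p+1}$ equal to $\mathfrak{gl}_2$ (block matrices of the form $(\alpha I_{p-1}, A)$ subject to $(p-1)\alpha + \operatorname{tr}(A) = 0$). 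By the general structure of BRST reductions, the $W$-algebra then inherits an affine $\mathfrak{gl}_2$ as a vertex subalgebra.

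Next I would compute the level $k$ at which these $\mathfrak{gl}_2$-currents sit inside $\cW_p$. By the standard prescription of Kac--Wakimoto \cite{KW3}, this level is determined by $k_0$, by the embedding index of the semisimple part $\mathfrak{sl}_2\hookrightarrow \mathfrak{sl}_{p+1}$, and by a correction governed by the $\operatorname{ad}(h/2)$-grading on $\mathfrak{sl}_{p+1}$. A short direct computation at $k_0 = -(p^2-1)/p$ should produce exactly $k + 2 = 1/p$, the boundary admissible value for $\mathfrak{sl}_2$ highlighted earlier in the note.

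Conformality then reduces to matching central charges. The Sugawara Virasoro of $V^k(\mathfrak{gl}_2)$ at $k+2 = 1/p$ has central charge
\[
1 + \frac{3k}{k+2} \;=\; 1 + 3(1-2p) \;=\; 4-6p,
\]
which is precisely $c_p[\cW_p]$. Since the Sugawara vector of any affine subalgebra is a conformal vector whose central charge is at most that of the ambient Virasoro, with equality forcing the two Virasoro elements to coincide, this yields the conformal embedding property.

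The main obstacle, as is typical for such statements, is passing from the universal affine VOA $V^k(\mathfrak{gl}_2)$ to its simple quotient $L_k(\mathfrak{gl}_2)$: one must verify that every singular vector of $V^k(\mathfrak{gl}_2)$ at $k+2=1/p$ maps to zero in $\cW_p$. I would attack this either by invoking simplicity of the BRST cohomology of a simple affine vertex algebra at suitable admissible levels (in the spirit of Arakawa's results), or, more concretely, by using the explicit realization of $\cW_p$ from Theorem \ref{thm:main} to identify the image of the universal $V^k(\mathfrak{gl}_2)$ inside $\cW_p$ with its simple quotient, thereby completing the proof.
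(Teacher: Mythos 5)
Your structural setup agrees with the paper's: the relevant nilpotent is the one principal in the $\mathfrak{sl}_{p-1}$ block of $\mathfrak{sl}_{p+1}$ (Jordan type $[p-1,1,1]$, cf.\ \eqref{eq:embsl2}), its centralizer supplies affine $\mathfrak{gl}_2$ currents, and the level and Sugawara central charge work out to $k+2=1/p$ and $4-6p=c_p[\cW_p]$. Two comments. First, the step you treat as routine is where the real gap sits: the claim that the Sugawara central charge of an affine subalgebra is \emph{at most} that of the ambient Virasoro, with equality forcing the conformal vectors to coincide, is false outside the unitary world. The commutant conformal vector $\omega-\omega'$ can have negative central charge, and a simple positive-energy \voa{} of central charge $0$ need not be trivial, so equality of central charges alone does not yield conformality; one needs either the hypotheses of the central charge criterion of \cite{AKMP2, AKMP3} (verified, not assumed) or an independent argument. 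Second, the step you single out as the main obstacle is actually free here: at $k+2=\frac{1}{p}$ the universal $V_k(\sltwo)$ is already simple by Gorelik--Kac \cite{GK}, as recalled in Section \ref{sec:Op}, and the Heisenberg \voa{} at nonzero level is simple, so the weight-one fields automatically generate $L_k(\mathfrak{gl}_2)$ once they are nonzero of the stated level.

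The paper closes the genuine gap by a different route: it evaluates the Kac--Wakimoto character formula for $\cW_p$, identifies the result with $\ch[\cX_p]$ (Proposition \ref{prop:main} and Theorem \ref{thm:main}), and then uses that the Weyl modules $L(m)$ are simple, that the category $\cO$ is semisimple at this level, and that their characters are linearly independent as meromorphic functions, to upgrade the character identity to an isomorphism $\cW_p\cong\cX_p$ of $L_k(\sltwo)\otimes\mathcal H$-modules. That single computation delivers both the decomposition and the conformal embedding at once, since the ambient grading is seen to agree with the Sugawara grading on every summand. If you wish to keep your more structural argument, you must replace the central-charge heuristic by an actual check that $\omega-\omega'$ vanishes, e.g.\ by verifying that the strong generators of $\cW_p$ have exactly the conformal weights predicted by the Sugawara vector of $\mathfrak{gl}_2$ --- which is, in effect, the information the character identity encodes.
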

Note, that $-2+\frac{1}{p}$ is not an admissible level of $\widehat\sltwo$. 

The \voas{} of type $(A_1, D_n)$ all contain an affine \voa{} of $\sltwo$ as sub \voa. One can thus consider the Quantum Hamiltonian reduction of these \voas. These must then be extensions of the Virasoro \voa. The next observation is Corollary \ref{cor:QH} and the beginning of Section \ref{sec:Schur}.
\begin{obs}
The \voa{} of type $(A_1, A_n)$ Argyres-Douglas theory is isomorphic to the Quantum Hamiltonian reduction of the type $(A_1, D_{n+3})$
Argyres-Douglas theory. If $n$ is even this is an isomorphism of \voas{} and if $n$ is odd this is an isomorphism of Virasoro modules. 
\end{obs}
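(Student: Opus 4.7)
The plan is to use iterated Drinfeld--Sokolov reduction, splitting the argument by the parity of $n$.

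For $n=2p-3$ odd, the $(A_1,A_n)$ \voa{} is $\cB_p$ and the $(A_1,D_{n+3})=(A_1,D_{2p})$ \voa{} is $\cW_p$. By the preceding observation on conformal embeddings, $L_{k}(\mathfrak{gl}_2)\hookrightarrow \cW_p$ with $k+2=1/p$, so one may apply the quantum Hamiltonian reduction functor $H_{DS}$ with respect to the embedded $\sltwo$. Because $\cW_p$ is itself realized (by Theorem~\ref{thm:main}) as a non-regular reduction $H_{f_0}(L_k(\mathfrak{sl}_{p+1}))$ of the boundary admissible affine \voa, the composition-of-stages principle for Drinfeld--Sokolov reductions rewrites $H_{DS}(\cW_p)$ as $H_{f_0+e}(L_k(\mathfrak{sl}_{p+1}))$, where $e$ is the principal nilpotent element of the embedded $\sltwo$. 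Matching the Jordan partition of $f_0+e$ with the partition for the generalized Bershadsky--Polyakov algebra then yields, via the identification announced in \cite{BR}, the agreement with $\cB_p$ at the level of Virasoro modules.

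For $n=2m$ even, both endpoints are known prior to this note: $(A_1,A_{2m})$ is the Virasoro $(2,2m+3)$ minimal model, and $(A_1,D_{2m+3})$ is realized as $L_k(\sltwo)$ at a boundary admissible level of the form $k+2=2/(2m+3)$. The classical Feigin--Frenkel theorem identifies the principal Drinfeld--Sokolov reduction of this affine \voa{} with the matching Virasoro minimal model, a central-charge check giving the required level as above, and hence the full \voa{} isomorphism in this case.

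In either case one cross-checks by characters via the Euler--Poincaré principle: $\ch[H_{DS}(-)]$ is computed in terms of the source character, either from Theorem~\ref{thm:main} or from the standard Kac--Wakimoto formula on the $\sltwo$ side, and is matched against the $\cB_p$ character of Theorem~\ref{thm:BPSchur}. The hard step is rigorously justifying the composition identity $H_{DS}\circ H_{f_0}\cong H_{f_0+e}$ at the level of simple quotients rather than universal algebras, and identifying the combined nilpotent orbit precisely; this is exactly why the odd-$n$ conclusion is stated only for Virasoro modules, since upgrading to a full \voa{} isomorphism would require a separate matching of all strong generators of $\cB_p$ inside the reduction.
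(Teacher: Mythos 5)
There is a genuine gap in your odd-$n$ argument: the ``composition-of-stages'' identity $H_{DS}\circ H_{f_0}\cong H_{f_0+e}$ that your plan hinges on is not an available theorem — reduction by stages for quantum Hamiltonian reductions is precisely the hard, unestablished step, and you acknowledge this yourself without resolving it. Moreover, even granting such an identity, the nilpotent-orbit bookkeeping you sketch does not obviously close: $\cB_p$ is (conjecturally) the \emph{subregular} reduction of $\mathfrak{sl}_{p-1}$, a smaller Lie algebra than the $\mathfrak{sl}_{p+1}$ you start from, so ``matching the Jordan partition of $f_0+e$'' with the generalized Bershadsky--Polyakov datum is not a computation you have actually performed, and the appeal to the announcement in \cite{BR} imports an unproven identification. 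The paper avoids all of this. Its route for $n=2p-3$ is Proposition \ref{prop:red}: one first establishes the branching $\cW_p\cong\cX_p=\bigoplus_m L(k,m)\otimes(F_{-\lambda m}\oplus\cdots\oplus F_{\lambda m})$ as $L_k(\sltwo)\otimes\mathcal H$-modules (Corollary \ref{cor:confemb}, obtained from the character identity of Theorem \ref{thm:main} plus semisimplicity of $\cO$), then applies the ``$+$''-reduction functor term by term using \cite[Theorem 9.1.4]{Ar2}, which gives $H_{+}^0(L(k,m))\cong L(h_{m+1,1},c_p)$, and finally compares with the known decomposition \eqref{eq:Bpdecomp} of $\cB_p$ into Virasoro$\,\otimes\,$Fock modules. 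This is also exactly why the conclusion is only an isomorphism of Virasoro modules: the functor is applied summand by summand, so no vertex algebra structure is transported — not, as you suggest, because of a failure to match strong generators after a stages argument.

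Your even-$n$ case is essentially the paper's one-line argument (the $(A_1,D_{2m+3})$ \voa{} is $L_k(\sltwo)$ at $k+2=2/(2m+3)$ and its regular reduction is the $(2,2m+3)$ minimal model), though the attribution should be to Arakawa's theorem on Drinfeld--Sokolov reduction of simple admissible-level affine \voas{} (the Frenkel--Kac--Wakimoto conjecture) rather than to the Feigin--Frenkel result, which concerns the universal algebras. To repair the odd case, replace the stages argument with the term-by-term application of $H_+^0$ to the decomposition of Corollary \ref{cor:confemb}.
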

I conjecture that also in the case $n$ odd this is a \voa{} isomorphism. The work in progress \cite{A2}, see also \cite{A}, on the $\mathcal R(p)$-algebra by Drazen Adamovi\'c should be useful here. 

This observation is very natural from the physics point of view. Firstly equation (5.20) of \cite{BN} relates the Schur-index of ype $(A_1, D_{2n})$-Argyres-Douglas theory to the one of type $(A_1, A_{2n-3})$ in an analogous way as one relates affine \voa{} characters to characters of the $W$-algebra obtained via Quantum Hamiltonian reduction.
Secondly, in \cite{APYY}, Riemann surfaces are asscoiated to the Argyres-Douglas theories and the surfaces corresponding to $(A_1, A_n)$ Argyres-Douglas theory and $(A_1, D_{n+3})$ Argyres-Douglas theory only differ by an extra regular puncture. The authors of that article find a natural cap state that closes this puncture and thus maps the $(A_1, D_{n+3})$ Argyres-Douglas theory to the $(A_1, A_{n})$ Argyres-Douglas theory.

\subsection{Singlet and triplet algebras as Quantum Hamiltonian reduction}

The triplet algebras $\Trip$, for $p$ in $\mathbb Z_{\geq 2}$ are the best understood $C_2$-cofinite but non-rational \voas{} \cite{TW, AM1, AM2}. The singlet algebra $\Sing$ is its $U(1)$-orbifold \cite{CM, AM3}. These algebras are closely related to the \voas{} of Argyres-Douglas theories. 

The $\Sing$ and $\Trip$ algebras are infinte order extensions of the Virasoro algebra Vir$_p$ at central charge $c_p=1-6(p-1)^2/p$.  $\Trip$ actually carries an action of PSL$(2, \CC)$ \cite{ALM} and as PSL$(2, \CC)\ \otimes $ Vir$_p$-module
\begin{equation}\nonumber
\Trip \cong  \bigoplus_{n=0}^\infty \rho_{2n+1} \otimes L(h_{2n+1, 1}, c_p)
\end{equation}
with $\rho_{n}$ the $n$-dimensional irreducible representation of PSL$(2, \CC)$ \cite{ALM} and $L(h_{n, 1}, c_p)$ the simple highest-weight module of the Virasoro \voa{} at central charge $c_p$ at hightest-weight $h_{n,1}=((n^2-1)p-2(n-1))/4$.
The singlet on the other hand decomposes as Virasoro module as
\begin{equation}\nonumber
\Sing \cong \bigoplus_{n=0}^\infty L(h_{2n+1, 1}, c_p).
\end{equation}
Corollary \ref{cor:QH} tells us the following. Let $k=\frac{1}{p}-2$, then the two $L_k(\sltwo)$-modules inherit each the structure of a simple vertex operator algebra from $\cW_p$
\[
\mathcal C_p := \text{Com}\left( \mathcal H, \cW_p \right) \cong \bigoplus_{m=0}^\infty  L(k, 2m)
\qquad
\text{and}
\qquad
\mathcal Y_p :=  \bigoplus_{m=0}^\infty (2m+1) L(k, 2m).
\]
Here $L(k,n)$ denotes the Weyl-module at level $k$ induced from the $n+1$ dimension irreducible representation of $\sltwo$. It is simple, see Section \ref{sec:Op}.
Proposition \ref{prop:red} follows:
As Virasoro \voa-modules, we have
\[
H_{QH}(\cW_p ) \cong \mathcal B_{p}, \qquad H_{QH}(\mathcal C_p ) \cong \mathcal M(p) \qquad \text{and} \qquad H_{QH}(\mathcal Y_p ) \cong \mathcal W(p).
\]
Here $H_{QH}$ is the so-called plus reduction functor of the Quantum Hamiltonian reduction from the affine \voa{} of $\mathfrak{sl}_2$ to the Virasoro \voa{}. It follows directly from \cite[Theorem 9.1.4]{Ar2}.
I strongly belief that this isomorphism is even a \voa-isomorphism. Moreover it is tempting to conjecture as well that $\mathcal Y_p$ contains PSL$(2, \CC)$ as subgroup of automorphisms and that as PSL$(2, \CC)\, \otimes\, L_k(\sltwo)$-module
\begin{equation}\label{eq:Yp}
\mathcal Y_p \cong  \bigoplus_{m=0}^\infty \rho_{2m+1} \otimes  L(k, 2m).
\end{equation}
Remark, that \voas{} with continuous groups of outer automorphisms are important in Davide Gaiotto's picture of dualities in $\mathcal N=2$ super conformal gauge theories \cite{Gai}. These dualities relate to the geometric Langlands correspondence and so it is worth speculating that 
the resemblance of $\mathcal Y_p$ to chiral Hecke algebras of $\sltwo$ is not a coincidence.\footnote{I thank Tomoyuki Arakawa for pointing out that $\mathcal Y_p$ is similar to chiral Hecke algebras of $\sltwo$.}

\subsection{Conjectural generalizations}

The \voas{} of this work allow for two conjectural generalizations. Firstly it seems that \cite{AKMP, AKMP2, AKMP3, KFPX} and this work together are first members of a large family of conformal embeddings of affine \voas{} into boundary admissible level $W$-algebras. 
We now define a family of $W$-algebras parameterized by two positive integers $n, m$ so let us call this algebra $\cW_{n, m}$.
Consider $\mathfrak{sl}_{n+m}$, then we can embed $\mathfrak{sl}_m\oplus \mathfrak{gl}_n$ in the obvious way. Consider the Quantum Hamiltonian reduction of $V_k(\mathfrak{sl}_{n+m})$ for the $\sltwo$ embedded as follows, embed $\sltwo$ principally  in the $\mathfrak{sl}_m$ sub algebra and map it trivially in $\mathfrak{gl}_n$ so that the affine sub \voa{} is $V_{k+m-1}(\mathfrak{gl}_n)$. 
The central charge of this $W$-algebra can be computed using equation (2.3) of \cite{KW3} and one gets 
\[
c_{n, m, k} = \frac{k((n+m)^2-1)}{k+n+m} -(m-1)\left(km(m+1)+m(m^2-m-1)+n(m^2-2m-1)\right).
\]
$\cW_{n, m}$ is defined to be this $W$-algebra at level $k$, such that
\[
k+n+m = \frac{n+m}{m+1}
\]
so that this is a boundary admissible level provided $n+m$ and $m+1$ are co-prime. The central charge then becomes $-n(mn+1+m)$ which coincides with the central charge of $V_{k+m-1}(\mathfrak{gl}_n)$.
I thus conjecture
\begin{conj}
Let $n, m$ be positive integers, $n\geq 2$. Then the simple affine \voa{} $L_{-\frac{nm+1}{m+1}}(\mathfrak{gl}_n)$ embeds conformally in $\cW_{n, m}$. 
 \end{conj}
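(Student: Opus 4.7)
My approach follows the strategy that has become standard for conformal embeddings at boundary admissible levels, and essentially amounts to matching Virasoro central charges plus a simple-quotient argument.

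\textbf{Step 1: Identify the affine sub-VOA and its level.} Let $k$ satisfy $k+n+m=(n+m)/(m+1)$, and consider the Quantum Hamiltonian reduction of $V_k(\mathfrak{sl}_{n+m})$ with respect to the $\sltwo$-embedding that is principal in the $\mathfrak{sl}_m$ factor and trivial on $\mathfrak{gl}_n$. I would first invoke the general result on non-regular reductions (e.g.\ Theorem 2.1 of \cite{KW3}) to see that $V_{k+m-1}(\mathfrak{gl}_n)$ survives as an affine sub-VOA of the universal reduction $\cW_{n,m}^{\text{univ}}$; with the chosen normalization this level is precisely $\ell := k+m-1 = -(nm+1)/(m+1)$.

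\textbf{Step 2: Check the Virasoro field matches via central charges.} I would then compute the Sugawara central charge of $V_\ell(\mathfrak{gl}_n) = V_\ell(\mathfrak{sl}_n)\otimes \mathcal H$. Using $\ell + n = (n-1)/(m+1)$, one finds
\begin{equation}\nonumber
\frac{\ell(n^2-1)}{\ell+n} + 1 \;=\; -(nm+1)(n+1) + 1 \;=\; -n(nm+m+1),
\end{equation}
which is exactly the central charge $c_{n,m,k}$ computed in the excerpt. Since $\mathfrak{gl}_n$ commutes with the reducing $\sltwo$, the stress tensor of $\cW_{n,m}^{\text{univ}}$ decomposes as the Sugawara tensor of $\mathfrak{gl}_n$ plus a commutant piece whose central charge is the difference; by the AKMP/Kac--Wakimoto criterion (see e.g.\ \cite{AKMP, AKMP2}) the vanishing of this difference is equivalent to a conformal embedding at the universal level.

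\textbf{Step 3: Descend to the simple quotients.} This is the step I expect to be the main obstacle. One needs to show that the image of $V_\ell(\mathfrak{gl}_n)$ inside the \emph{simple} quotient $\cW_{n,m}$ is the simple affine VOA $L_\ell(\mathfrak{gl}_n)$, rather than some further quotient. For this I would argue as follows: the boundary admissibility $\ell+n = (n-1)/(m+1)$ places $\ell$ at an admissible level of $\widehat{\mathfrak{sl}}_n$, so the maximal ideal of $V_\ell(\mathfrak{sl}_n)$ is explicitly known (generated by a singular vector coming from the extremal weight, as in \cite{KW2}). One then checks that this singular vector must map to zero in $\cW_{n,m}$: either by showing it would generate a proper ideal of $\cW_{n,m}$ (contradicting simplicity), or, following the pattern of the proof of Theorem \ref{thm:main}, by exhibiting $\cW_{n,m}$ as a module over the simple quotient via an explicit realization (for instance a free-field or Wakimoto-type realization, along the lines used for $\cW_p$).

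\textbf{Verification of hypotheses and concluding remark.} Finally one should check the coprimality condition $\gcd(n+m,m+1)=\gcd(n-1,m+1)=1$ for the boundary admissibility statement, and note that the conjecture for general $(n,m)$ specializes at $(n,m)=(1,p)$ to the $\cW_p$ result proved as Theorem \ref{thm:main} and at $m=1$ to several of the minimal $W$-algebra conformal embeddings of \cite{AKMP, AKMP2, AKMP3, KFPX}, which provides nontrivial consistency checks. The main technical difficulty will be the existence of a tractable realization of $\cW_{n,m}$ that makes the vanishing of the relevant singular vectors checkable for arbitrary $n, m \geq 2$; in its absence one can at least establish the conformal embedding at the universal level and reduce the conjecture to a statement about the maximal ideal of $V_\ell(\mathfrak{gl}_n)$.
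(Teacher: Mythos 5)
The statement you have been asked to prove is presented in the paper as a \emph{conjecture}: the author gives no proof, only the observation that the central charges match (the computation you reproduce correctly in Step 2) together with citations for the known special cases $m=1$ \cite{AKMP2}, $m=2$ \cite{AKMP}, and $n=2$, the last being Corollary \ref{cor:confemb} of the paper, which is proven by an explicit character computation and module decomposition rather than by any central-charge criterion. So there is no proof in the paper to compare against, and your proposal --- as you yourself concede in Step 3 and the concluding remark --- does not supply one either; it is a plan whose decisive step is left open.

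Beyond the gap you flag, two concrete problems. First, in Step 2 you invoke an ``AKMP/Kac--Wakimoto criterion'' asserting that equality of the Sugawara central charge of $V_\ell(\mathfrak{gl}_n)$ with $c_{n,m,k}$ is \emph{equivalent} to a conformal embedding. That equivalence is a theorem only for minimal $W$-algebras, i.e.\ for the minimal nilpotent, which in this family is the case $m=2$ \cite{AKMP, AKMP3}; for the general non-regular reduction defining $\cW_{n,m}$ no such criterion is available. Central-charge matching only shows that the commutant Virasoro vector $\omega-\omega_{\mathfrak{gl}_n}$ has central charge zero, not that it vanishes, so even your fallback claim that the conformal embedding holds ``at the universal level'' and that the conjecture reduces to a statement about the maximal ideal of $V_\ell(\mathfrak{gl}_n)$ is not established. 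Second, a small but real slip: the case proved in the paper is $(n,m)=(2,p-1)$, i.e.\ $\cW_p=\cW_{2,p-1}$, not $(n,m)=(1,p)$, which is excluded by the hypothesis $n\geq 2$. The paper's route in that case --- computing $\ch[\cW_p]$ from the Kac--Wakimoto character formula \cite{KW2}, matching it with the decomposition of $\cX_p$ over $L_k(\sltwo)\otimes\mathcal H$ as in Theorem \ref{thm:main}, and using semisimplicity of the category $\cO$ of Weyl modules --- is the kind of argument that would have to be generalized to arbitrary $(n,m)$, and it requires substantially more than the numerology in your Steps 1 and 2.
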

 This conjecture for $m=1$ has been proven in \cite[Theorem 5.1]{AKMP2}, for $m=2$ in \cite[Theorem 1.1]{AKMP} and we treated the case of $n=2$.
 
Generalizations of the singlet and triplet algebras exist as well \cite{FT, CM2} and have been called narrow $W$-algebras $W^0(p)_Q$. Here $Q$ denotes the root lattice of a simply laced Lie algbera and $p$ is a positive integer greater or equal to two. 
 Consider the regular quantum Hamiltonian reduction of $\cW_{n,  m}$. By this we mean the principal embedding of $\sltwo$ in $\mathfrak{sl}_n$ and the corresponding Quantum Hamiltonian reduction of $\cW_{n, m}$. 
\begin{conj}Let $m+1=p(n-1)$.
 Then the Heisenberg coset of the regular Quantum Hamiltonian reduction $H_{\text{reg}}\left(\cW_{n, m} \right)$ of $\cW_{n, m}$ is isomorphic to the narrow $W$-algebra of type $\mathfrak{sl}_n$ and parameter $p$:
 \[
 \text{Com}\left(\mathcal H, H_{\text{reg}}\left(\cW_{n, m} \right)\right) \cong W^0(p)_{A_{n-1}}.
 \]
  \end{conj}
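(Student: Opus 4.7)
The plan is to realize $H_{\text{reg}}(\cW_{n,m})$ as a single quantum Hamiltonian reduction of $V_k(\mathfrak{sl}_{n+m})$ and then match the resulting Heisenberg coset with the screening-operator definition of $W^0(p)_{A_{n-1}}$. By construction $\cW_{n,m}=H_{f_1}(V_k(\mathfrak{sl}_{n+m}))$ for $f_1$ principal in $\mathfrak{sl}_m$ and zero in $\mathfrak{gl}_n$, and $H_{\text{reg}}$ then performs a second reduction with respect to a nilpotent $f_2$ principal in the surviving $\mathfrak{sl}_n$ factor. By reduction in stages for quantum Hamiltonian reduction, in the spirit of Madsen-Ragoucy and more recent work of Genra and Morgan, the composition should equal a single QHR at the nilpotent $f=f_1+f_2$, which has Jordan partition $(m,n)$. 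One therefore expects
\[
H_{\text{reg}}(\cW_{n,m}) \cong W_k(\mathfrak{sl}_{n+m}, f_{(m,n)})
\]
at the boundary admissible level $k+n+m=(n+m)/(m+1)$.

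Next I would isolate the Heisenberg. The centralizer $\mathfrak{g}^f\subset\mathfrak{sl}_{n+m}$ contains a one-dimensional toral subalgebra, namely the trace-difference of the two Jordan blocks of sizes $m$ and $n$, which generates a Heisenberg $\mathcal H$ inside $W_k(\mathfrak{sl}_{n+m},f_{(m,n)})$. This $\mathcal H$ is precisely the image under $H_{\text{reg}}$ of the $\mathfrak{u}(1)$ summand of $\mathfrak{gl}_n=\mathfrak{sl}_n\oplus\mathfrak{u}(1)$ sitting in $\cW_{n,m}$ via the conformal embedding predicted by the previous conjecture. A direct computation with the Kac-Wakimoto central charge formula of \cite{KW3}, together with the level identity and the relation $m+1=p(n-1)$, shows that the central charge of $\text{Com}(\mathcal H,W_k(\mathfrak{sl}_{n+m},f_{(m,n)}))$ coincides with that of $W^0(p)_{A_{n-1}}$.

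To upgrade a central-charge match to a \voa{} isomorphism, I would use a Miura-type free-field realization of the hook-type $W$-algebra as a subalgebra of a tensor product of Heisenberg \voas{} and $\beta\gamma$-systems. Cosetting out $\mathcal H$ on this side should match the known realization of $W^0(p)_{A_{n-1}}$ as the intersection of kernels of long screening operators in a rank $n-1$ Heisenberg \voa{}, with screening momenta fixed by $m+1=p(n-1)$. A useful consistency check is the case $n=2$, where the hook-type $W$-algebra is $\cW_p$ itself, $W^0(p)_{A_1}$ is the singlet $\Sing$, and the statement reduces to Proposition \ref{prop:red}.

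The principal obstacle is twofold. First, reduction in stages is typically proved at the universal level, whereas $\cW_{n,m}$ is the simple quotient at a specific non-generic level; one must therefore control the descent of singular vectors under $H_{\text{reg}}$ in order to identify the simple quotients on both sides and exclude unwanted collapse or survival of relations. Second, matching the Heisenberg coset with an intersection of screening-operator kernels is a delicate free-field computation, requiring a precise identification of the screenings arising from the Miura realization with those used in the defining presentation of the narrow $W$-algebra, and a strong generation argument to conclude that no further elements lie outside these kernels. Carrying this out uniformly for all $n$ will likely require the higher-rank machinery of \cite{ACKR}.
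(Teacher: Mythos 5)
The statement you are trying to prove is stated in the paper as a \emph{conjecture}, and the paper offers no proof of it in general; the author explicitly notes that only the case $n=2$ is established, and even then only as an isomorphism of Virasoro modules. That case is handled by Proposition \ref{prop:red}: one applies Arakawa's result $H_{+}^0(L(k,m))\cong L(h_{m+1,1},c_p)$ term by term to the explicit decomposition $\mathcal C_p=\text{Com}(\mathcal H,\cW_p)\cong\bigoplus_m L(k,2m)$ and compares with the known Virasoro decomposition \eqref{eq:Mpdecomp} of the singlet $\Sing$. So the benchmark here is not a proof you failed to reproduce, but an open problem for which your text is a research programme rather than an argument.

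As a programme your outline is sensible, and your $n=2$ consistency check correctly lands on Proposition \ref{prop:red}; but the gaps you flag at the end are genuine and are precisely where a proof would have to live. Reduction by stages for quantum Hamiltonian reduction --- the claim that the regular reduction of the hook-type reduction equals a single reduction at $f_{(m,n)}$ --- is not available in the paper's framework and is not merely a bookkeeping step: it must be established at the universal level and then shown to descend to the simple quotient at the specific non-generic boundary admissible level, where singular vectors may fail to survive or may collapse the reduction. A central charge match, which is all your second paragraph actually delivers, does not distinguish $W^0(p)_{A_{n-1}}$ from other extensions of the same Virasoro or $W_{n-1}$-algebra; the paper's own $n=2$ argument gets around this only because it has the full $L_k(\sltwo)\otimes\mathcal H$-module decomposition of $\cW_p$ in hand (Corollary \ref{cor:confemb}), for which no analogue is proved for $n\geq 3$. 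Finally, the identification of the Heisenberg coset with an intersection of kernels of screenings is asserted but not carried out, and strong generation of the narrow $W$-algebra by those kernels is itself only conjectural in higher rank. In short: nothing you wrote is wrong, but nothing past the $n=2$ check constitutes a proof, which is consistent with the paper leaving the statement as a conjecture.
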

    In this work, this conjecture has been proven in the case of $n=2$, but only as isomorphism of Virasoro modules.

\subsection{Organization}

In Section \ref{sec:Schur} the Schur-indices of Argyres-Douglas theories are quickly reviewed. 

Section \ref{sec:Xp} starts with a discussion of properties of Weyl modules for $L_k(\sltwo)$ when $k+2=\frac{1}{p}$ and $p$ in $\ZZ_{\geq 2}$. Next the $L_k(\sltwo)\otimes \mathcal H$-module $\cX_p$ is introduced. Here $\mathcal H$ denotes the rank one Heisenberg \voa. The reason for introducing $\cX_p$ is that its character coincides with the Schur index of type $(A_1, D_{2p})$-Argyres-Douglas theory. I then discuss modular and Jacobi properties of characters of twisted versions of $\cX_p$ using Appell-Lerch sums. 

In Section \ref{sec:Bp} properties of the $\cB_p$-algebra are discussed and especially it is observed that its character coincides with the 
Schur index of type $(A_1, A_{2p-3})$-Argyres-Douglas theory.

Finally in Section \ref{sec:Walgebras} the properties of $\cW_p$ mentioned in the introduction are derived; especially its character coincides with the Schur index of type $(A_1, D_{2p})$-Argyres-Douglas theory.

{\bf Acknowledgements} Most of all, I am very grateful to Shu-Heng Shao for many discussions on this topic. I also would like to thank Tomoyuki Arakawa, Chris Beem, Leonardo Rastelli and Davide Gaiotto for explanations on the relation of \voas{} and four-dimensional quantum field theory. I thank Drazen Adamovi\'c for his useful comments on the draft and discussion. Finally I very much appreciate discussions with Wenbin Yan and Ke Ye and sharing \cite{APYY} with me. 
I am supported by NSERC RES0020460 and have benefitted from the workshop "exact operator algebras in superconformal field theories" on the topic at Perimeter Institute.

\section{Schur-indices of $(A_1, X_n)$ Argyres-Douglas theories}\label{sec:Schur}

I collect some data from \cite{BN, CS}. First recall the known cases. 

The Schur index of $(A_1, A_{2n})$ Argyres-Douglas theories is identified with the vacuum character 
of the simple and rational $(2, 2n+3)$ Virasoro \voa{} at central charge $c=1-6(2n+1)^2/(4n+6)$ \cite{Ras}.
The Schur index of $(A_1, D_{2n+1})$ Argyres-Douglas theories is identified with the vacuum character 
of the simple affine \voa{} of $\sltwo$ at level $k=-4n/(2n+1)$, denoted by $L_k(\sltwo)$. Note, that $k+2=2/(2n+1)$ and hence the quantum Hamiltonian reduction of the $(A_1, D_{2n+1})$ Argyres-Douglas theory \voa{} is the \voa{} of the $(A_1, A_{2n-2})$ Argyres-Douglas theory. 
The Schur index of the $(A_1, E_n)$-Argyres-Douglas theory coincides with the vacuum character of the simple regular $W$-algebra of $\mathfrak{sl}_3$ at level $k+3=3/7$ for $E_6$ and at level $k+3=3/8$ for $E_8$. In the case of $E_7$ it seems to correspond to the sub-regular $W$-algebra of $\mathfrak{sl}_3$ at level $k+3=3/5$ (Shu-Heng Shao has verified this for the leading terms). 

\begin{rem}\label{rem:bdy}
We will see that all $W$-algebras corresponding to $(A_1, X_n)$ Argyres-Douglas theories are of boundary admissible level (in the case of $A_{\text{odd}}$ and $E_7$ this is conjectural). A level $k$ associated to a simply-laced Lie algebra $\mathfrak g$ is called boundary admissible if it satisfies $k+h^\vee=\frac{h^\vee}{p}$ for $h^\vee$ the dual Coxeter number of $\mathfrak g$  and $p$ a positive integer co-prime to the dual Coxeter number. 
The notion of boundary admissible level appeared in \cite{KW2} and at such a level the meromorphic continuation of characters has a particular nice form.
\end{rem}

\subsection{$(A_1, A_{2p-3})$ Argyres-Douglas theories}

The central  charge for this case is 
\[
c_p= 2-6\frac{\left(p-1\right)^2}{p}
\]
and equation (5.14) of \cite{BN} is easily modified to
\begin{equation}\label{eq:Aodd}
q^{-\frac{c_p}{24}}\mathcal I_{(A_1, A_{2p-3})}(z; q)  =  \frac{1}{\eta(q)^2} \sum_{n\in \mathbb Z}\left( \frac{q^{p\left(n+\frac{1}{2}-\frac{1}{2p}\right)^2}}{1-zq^{p\left(n+\frac{1}{2}-\frac{1}{2p}\right)}} - \frac{q^{p\left(n+\frac{1}{2}+\frac{1}{2p}\right)^2}}{1-zq^{p\left(n+\frac{1}{2}+\frac{1}{2p}\right)}}\right).
\end{equation}
The domain for $z$ and $q$ is $|z^{\pm 1}q^{\frac{p-1}{2}}| < 1$.
We will observe that this formula coincides with the vacuum character of the $\cB_p$-algebra of \cite{CRW}. This algebra is conjecturally a subregular $W$-algebra of $\mathfrak{sl}_{p-1}$ at level $k+p-1=\frac{p-1}{p}$ and this conjecture is true in the first cases $p=3, 4, 5$. The case $p=5$ follows due to the recent identification of Naoki Genra  \cite{G} of these subregular $W$-algebras with Feigin-Semikhatov algebras \cite{FS}.

\subsection{$(A_1, D_{2p})$ Argyres-Douglas theories}

The Schur-index of the $(A_1, D_{2})$  Argyres-Douglas theory is just the vacuum character of the rank two $\beta\gamma$-\voa{} $\cS^2$, which we will discuss as an instructive baby example in Section \ref{sec:D2}.

The central charges for this series are
\[
c_p = 4-6p.
\]
A closed formula for the Schur-Index is conjectured in equation (1.5) of \cite{BN}.
It reads
\begin{equation}\label{eq:Aodd}
\mathcal I_{(A_1, D_{2p})}(x, z; q)  =  \sum_{m=0}^\infty \tilde f^p_{\rho_m}(q; x) f_{\rho_m}(q; z)
\end{equation}
with $\rho_m$ the $m$-dimensional irreducible representation of $\sltwo$ and 
\[
\tilde f^p_{\rho_m}(q; x) = \frac{q^{\frac{p(m^2-1)}{4}}}{\prod_{n=1}^\infty (1-q^n)} \tr_{\rho_m}\left(x^{h} q^{-\frac{p h^2}{4}} \right) 
\]
where our $\sltwo$ convention is that $h, e, f$ form a basis with commutation relations
\[
[h, e]=2e, \qquad [h, f]=-2f, \qquad [e, f]=h.  
\]
$f_{\rho_m}(q;z)=P.E.\left[ \frac{q}{1-q}\text{ch}[\rho_3](z)\right] \ch[\rho_m](z)$ is defined using the plethystic exponential 
\[
P.E.[F(q; z)] = \text{exp}\left( \sum_{k=1}^\infty \frac{F(q^k; z^k)}{k} \right)
\]
and it can be simplified as follows
\begin{equation}\label{eq:PE}
\begin{split}
f_{\rho_m}(q;z) &= P.E.\left[ \frac{q}{1-q}\text{ch}[\rho_3](z)\right] \ch[\rho_m](z) \\
&= P.E.\left[ \frac{q}{1-q} \left(z^2 + 1 +z^{-2}\right)\right] \ch[\rho_m](z) \\
&= P.E.\left[ \sum_{n=1}^\infty q^n \left(z^2 + 1 +z^{-2}\right)\right] \ch[\rho_m](z) \\
&= \ch[\rho_m](z)\prod_{n=1}^\infty \text{exp}\left( q^n \left(z^2 + 1 +z^{-2}\right) \right) \\ 
&=  \ch[\rho_m](z)\prod_{n=1}^\infty  \left(1-z^2q^n\right)^{-1} \left(1-q^n\right)^{-1} \left(1-z^{-2}q^n\right)^{-1}.
\end{split}
\end{equation}

\section{The $L_k(\sltwo)\otimes \mathcal H$-module $\cX_p$}\label{sec:Xp}

The aim is to find a \voa{} for each Schur-index. For this, we will first find an $L_k(\sltwo)\otimes \mathcal H$-module $\cX_p$ whose character coincides with the $(A_1, D_{2p})$ Schur-index.

\subsection{The affine \voa{} $L_k(\sltwo)$}

I use \cite{CR1, CR2} for conventions on the affine \voa{} $L_k(\sltwo)$.

We denote the generators of $\mathfrak \sltwo$ by $h_n, e_n, f_n, K$ and $d$ for $n$ integer. As usual we identify the action of $d$ with $-L_0$. $K$ is central and the modes satisfy
\begin{equation}\nonumber
\begin{split}
[h_m, e_n] &= +2e_{n+m},\qquad [h_m, h_n] = 2m\delta_{n+m, 0}K,\qquad\qquad\quad [e_m, e_n] = 0,\\ 
[h_m, f_n] &= -2f_{n+m},\qquad [e_m, f_n] = h_{n+m} +m\delta_{n+m, 0}K,\qquad [f_m, f_n] = 0.
\end{split}
\end{equation}
The affine \voa{} $L_k(\sltwo)$ is strongly generated by fields $e(z), h(z), f(z)$ with operator products
\begin{equation}\nonumber
\begin{split}
h(z)e(w) &\sim \frac{2e(w)}{(z-w)}, \qquad h(z)h(w) \sim \frac{2k}{(z-w)^2}, \qquad h(z)f(w) \sim -\frac{2f(w)}{(z-w)}, \\
e(z)f(w)&\sim  \frac{k}{(z-w)^2},+\frac{h(w)}{(z-w)}.
\end{split}
\end{equation}
The Virasoro field for $k+2\neq 0$ is 
\[
L(z) = \frac{1}{2(k+2)} \left(\frac{1}{2} :h(z)h(z): + :e(z)f(z): + :f(z)e(z); \right)
\]
and it has central charge 
\[
c=\frac{3k}{k+2} = 3-\frac{6}{t}
\]
with $k+2=t$. 
Let 
\[
\mathfrak n = \text{span}_\CC\left( e_{n-1}, f_n, h_n \ | \ n\geq 1\right) 
\]
be a nilpotent subalgebra, 
\[
\mathfrak h =  \CC h_0 \oplus \CC K\oplus \CC d
\]
the Cartan subalgebra and $\mathfrak b=\mathfrak h\oplus \mathfrak n$ the corresponding Borel subalgebra. 
Let $\CC_\lambda$ the one-dimensional $\mathfrak b$-module on which $\mathfrak n$ acts as zero,  $h_0$ acts by multiplication with $\lambda$, $K$ by $k$ and $L_0$ by $h_\lambda$ 
where
\[
h_\lambda = \frac{\lambda(\lambda+2)}{4(k+2)}.
\]
Denote by
\[
V(k, \lambda) =  \text{Ind}^{\widehat\sltwo}_{\mathfrak b} \CC_\lambda
\]
and its simple quotient by $L(k, \lambda)$. We will surpress the level $k$ if its value is clear.
Singular vectors are determined by the determinant of the Shapovalov form on the weight spaces. 
This determinant for the weight space of weight $(\lambda-\mu, k, h_\lambda+m)$ is given by the formula of Kac and Kazhdan \cite{KK}, which I take from (2.8) of \cite{Ri}.
\begin{equation}\label{eq:KK}
\begin{split}
\text{det}_\lambda(\mu, m)&= \prod_{\ell, n=1}^\infty \left(\lambda+1-\ell\right)^{P\left(2\ell-\mu, m\right)}   \left(\lambda+1+n(k+2)-\ell\right)^{P\left(2\ell-\mu, m-n\ell\right)}   \\
&\qquad \cdot \left(-\lambda-1+n(k+2)-\ell\right)^{P\left(-2\ell-\mu, m-n\ell\right)}    \left(n(k+2)\right)^{P\left(-\mu, m-n\ell\right)}   
\end{split}
\end{equation}
$P(\mu, m)$ is the multiplicity of the weight $(\mu, m)$ in the level zero vacuum module $V_0(\sltwo)$.
A singular vector is a highest-weight vector of the Verma module and it appears if one of the factors vanishes as well as the arguments of $P$ appearing in the exponent of the corresponding factor.    

\subsubsection{The case $k+2=\frac{1}{p}$}\label{sec:Op}

For this work, the relevant levels are those for which $k+2=\frac{1}{p}$ and $p$ is a positive integer. 
In this case, the universal affine \voa{} is simple, i.e. $V_k(\sltwo)=L_k(\sltwo)$ \cite{GK}. For the purpose of this article it remains to understand irreducibility of all Weyl modules as suggested in \cite[Remark 6.2]{AKMP}.

We need to analyze the factors of \eqref{eq:KK} of $V(k, \lambda)$ for $\lambda>0$.
\begin{enumerate}
\item The first factor vanishes for $\ell =\lambda+1$ and the arguments of $P$ vanish if $\mu=2\ell=2(\lambda+1)$ and $m=0$. We thus have a singular vector of weight $(-\lambda-2, k, h_\lambda)$. 
 \item The second factor vanishes if $\lambda+1+\frac{n}{p}-\ell = 0$ and the argument of the corresponding $P$ vanish if $2\ell=\mu$ and $m=n\ell$. Set $n=pr$. The condition of being a singular vector constraints the conformal weight to satisfy $h_{\lambda-\mu}=h_\lambda+m$ implying that $4m=\mu p(\mu-2-\lambda)$. It follows that $2r\mu = 4r\ell = \mu (\mu-2-\lambda)$. Now $\mu=0$ implies $\ell=0$, which cannot be since $\ell>1$.  We thus have $2r=\mu-2-\lambda$ and thus 
 \[
 0=2\lambda + 2 +2r-2\ell = 2\lambda+2+\mu-2-\lambda-\mu=\lambda. 
 \]
 \item The third factor vanishes if $-\lambda-1+r-\ell=0$ as well as $2\ell-\mu$ and $m=n\ell= h_{\lambda-\mu}-h_\lambda$. An analogous computation as case two reveals that then $\lambda=0$. 
 \end{enumerate}
 We thus have the short-exact sequence for $\lambda \in \ZZ_{>0}$
 \[
 0 \longrightarrow V(k, -\lambda-2) \longrightarrow V(k, \lambda) \longrightarrow L(k, \lambda)  \longrightarrow 0. 
 \]
 Let $\cO$ be the category of level $k=-2+1/p$-modules with the following three properties: the conformal weight of any object is bounded below, $\mathfrak h$ acts semi-simply, $L_0$ eigenspaces are finite-dimensional.
 The simple objects of $\cO$ are then the Weyl modules $L(k, \lambda)$ for $\lambda$ non-negative integer. Note, that $L(k, 0)= V_k(\sltwo)$. This category is semi-simple \cite{K}.

Let $|z^{\pm 2}q|<1$, then the character of such a module is 
\begin{equation}
\begin{split}
\text{ch}[L(n+1)] (z;q)&=q^{\Delta_n+\frac{p}{4}-\frac{1}{p}}\prod_{n=0}^\infty \frac{\text{ch}[\rho_n](z)}{(1-z^2q^{n+1})(1-z^{-2}q^{n+1})(1-q^{n+1})}\end{split}
\end{equation}
where the conformal weight $\Delta_n$ is $p(n^2-1)/4$ and $\rho_n$ denotes the $n$-dimensional irreducible representation of $\sltwo$ whose character is 
\begin{equation}\nonumber
\begin{split}
\text{ch}[{\rho_{n}}](z) &=  z^{n-1}+z^{n-3}+...+z^{-n+3}+z^{-n+1} =
 \frac{z^{n-1}-z^{-n-1}}{1-z^{-2}}. 
\end{split}
\end{equation}

\subsubsection{Spectral flow}

A standard way of constructing new modules out of given ones is spectral flow. The notation here is taken from \cite{CR1, CR2}.
Define $\sigma^\ell$ for integer $\ell$ by
\[
\sigma^\ell(e_n)=e_{n-\ell}, \quad \sigma^\ell(h_n)=h_{n} -\delta_{n, 0}\ell k, \quad \sigma^\ell(f_n)=f_{n+\ell}, \quad \sigma^\ell(L_0)=L_0-\frac{1}{2} \ell h_0 +\frac{\ell^2}{4}k.
\]
The last equation is of course a consequence of the first three. Let $M$ be a module of $\widehat\sltwo$ on which $K$ acts by the scalar $k$. One then defines $\sigma^\ell(M)$ as follows. As a vector space it is isomorphic to $M$ with isomorphism denoted by $\sigma^{*, \ell}$ and the action of any $X$ in  $\widehat\sltwo$ is given by
\[
X \sigma^{*, \ell}(v) = \sigma^{*, \ell}(\sigma^{-\ell}(X) v). 
\]
Most important to this work is the relation of characters,
\[
\ch[\sigma^\ell(M)](z; q) =  q^{\frac{\ell^2}{4p}-\frac{\ell^2}{2}} z^{\frac{\ell}{p}-2\ell} \ch[M](zq^{\ell/2}; q).
 \]
 
\subsection{The $L_k(\sltwo)\otimes \mathcal H$-module $\cX_p$ and Appell-Lerch sums}

Fix a positive integer $p$.
Let $\mathcal H$ be the rank one Heisenberg \voa. We need the Fock modules $F_{\lambda_p n}$ of highest-weight $\lambda_p n$. Here $\lambda_p^2=-\frac{p}{2}$, so that the character is 
\[
\ch[F_{\lambda_p n}](x; q)) = \frac{x^n q^{-p \frac{n^2}{4}}}{\eta(q)}.
\]
Whenever it is clear that we have fixed a positive integer $p$ we will surpress the subscript $p$ of $\lambda_p$. 

Let $k+2=\frac{1}{p}$ so that the central charge of $L_k(\sltwo)\otimes \mathcal H$ is
\[
c_p = 4-6p.
\]
We are interested in the following $L_k(\sltwo)\otimes \mathcal H$-module
\begin{equation}\label{eq:Xp}
\cX_p :=  \bigoplus_{m=0}^\infty  L(m) \otimes \left( F_{-\lambda m} \oplus F_{-\lambda(m-2)} \oplus \dots \oplus F_{\lambda (m-2)} \oplus F_{\lambda m}\right).
\end{equation}
Plugging \eqref{eq:PE} into the formula for $\mathcal I_{(A_1, D_{2p})}(x, z; q)$ one immediately sees that
\begin{thm}\label{thm:XpSchur} The character of $\cX_p$ agrees with the Schur-index $\mathcal I_{(A_1, D_{2p})}(x, z; q)$
\begin{equation}
q^{-\frac{c_p}{24}}\mathcal I_{(A_1, D_{2p})}(x, z; q)  = \ch[\cX_p](x, z; q). 
\end{equation}
\end{thm}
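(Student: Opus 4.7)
The plan is a direct character comparison. From the definition \eqref{eq:Xp}, the character of $\cX_p$ factorizes as
\[
\ch[\cX_p](x,z;q) = \sum_{m=0}^\infty \ch[L(m)](z;q) \cdot \sum_{j\in J_m} \ch[F_{\lambda j}](x;q),
\]
where $J_m=\{-m,-m+2,\dots,m-2,m\}$ is precisely the set of $h$-weights of the $(m+1)$-dimensional irreducible representation $\rho_{m+1}$ of $\sltwo$. Substituting the Fock character $\ch[F_{\lambda j}](x;q) = x^j q^{-pj^2/4}/\eta(q)$, I would rewrite the inner sum as $\eta(q)^{-1}\,\tr_{\rho_{m+1}}\!\left(x^h q^{-ph^2/4}\right)$, which up to explicit $q$-powers is exactly the $x$-dependent ingredient $\tilde f^{p}_{\rho_{m+1}}(q;x)$ appearing in the Schur-index formula of \cite{BN}.

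Next I would invoke the explicit character formula for the simple Weyl module $L(m)$ derived in Section \ref{sec:Op}, which up to a rational power of $q$ has the shape
\[
\ch[L(m)](z;q) \propto \ch[\rho_{m+1}](z)\prod_{n=1}^\infty \frac{1}{(1-z^2q^n)(1-z^{-2}q^n)(1-q^n)}.
\]
By the plethystic simplification \eqref{eq:PE}, the right-hand side is precisely $f_{\rho_{m+1}}(q;z)$. Multiplying the two factors together and absorbing the central-charge shift $q^{-c_p/24}$ with $c_p=4-6p$, the $m$-th summand of $q^{-c_p/24}\,\ch[\cX_p](x,z;q)$ becomes $\tilde f^{p}_{\rho_{m+1}}(q;x)\, f_{\rho_{m+1}}(q;z)$.

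Finally, I would note that in the Schur-index sum of \cite{BN} the $m=0$ term vanishes, since $\rho_0$ is by convention the zero representation and its weighted trace is empty; reindexing $m\mapsto m+1$ on the physics side then identifies the two sums summand by summand, proving the theorem. The only nontrivial step is arithmetic bookkeeping: one must track the rational $q$-powers coming from the conformal weight $\Delta_m=p(m^2-1)/4$ of $L(m)$, from the normalization $p/4-1/p$ built into $\ch[L(m)]$, from the Fock shifts $-pj^2/4$, from the two $\eta(q)^{-1}$ factors, and from the central-charge correction, and verify that they combine to give exactly the prefactor $q^{p((m+1)^2-1)/4}/\prod_n(1-q^n)$ that defines $\tilde f^{p}_{\rho_{m+1}}(q;x)$. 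Since every structural ingredient is already in place, this is the only concrete calculation required, and after it the identity follows immediately.
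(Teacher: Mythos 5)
Your proposal is correct and takes essentially the same route as the paper: the paper's entire proof consists of the remark that substituting the simplification \eqref{eq:PE} into the Schur-index sum reproduces $\ch[\cX_p]$ term by term, which is precisely the summand-by-summand identification you spell out (matching $\rho_{m+1}$ with $L(m)\otimes\bigl(F_{-\lambda m}\oplus\cdots\oplus F_{\lambda m}\bigr)$ and tracking the rational $q$-powers). The only caveat is that the prefactor you quote as $q^{p/4-1/p}$ is a typo in the paper for $q^{p/4-1/8}=q^{-c[L_k(\sltwo)]/24}$, as the computation in Proposition \ref{prop:main} confirms; with that correction your bookkeeping closes exactly.
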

The question is thus: Can $\cX_p$ be given the structure of a simple \voa?
We will provide a positive answer in Section \ref{sec:Walgebras}.

The level two Appell-Lerch sum of \cite{AC} is
\[
A_2(s, t, \tau) = w \sum_{n\in \ZZ} \frac{q^{n(n+1)}y^n}{1-wq^n}, \qquad w=e(s),\ y=e(t). 
\]
Its modular and elliptic properties are given in Corollary 7 of \cite{AC}:
\begin{lem}
The following properties hold
\begin{equation}\nonumber
\begin{split}
A_2(s+1, t; \tau) &= A_2(s, t; \tau) \\
A_2(s+\tau, t; \tau) &= w^2y^{-1}q A_2(s, t; \tau) + wy^{-\frac{1}{2}}q^{\frac{3}{4}}\sum_{m=0}^1w^mq^{\frac{m}{2}}\theta_1\left(t+m\tau +\frac{3}{2}; 2\tau\right)   \\
A_2(s, t+1; \tau) &= A_2(s, t; \tau) \\
A_2(s, t+\tau; \tau) &= w^{-1} A_2(s, t; \tau) - y^{-\frac{1}{2}}q^{-\frac{1}{4}}\theta_1\left(t +\frac{3}{2}; 2\tau\right)   \\
A_2(s, t; \tau+1) &= A_2(s, t; \tau) \\
A_2\left(\frac{s}{\tau}, \frac{t}{\tau}, -\frac{1}{\tau} \right) &= \tau e^{-2\pi i\left(s^2-st\right)/\tau} \left(A_2(s, t; \tau) + \phantom{\sum_1^1}\right. \\
&+\left. \frac{i}{2}\sum_{m=0}^{1} e^{2\pi i sm} \theta_1\left(t+m\tau-{\frac{1}{2}}; 2\tau\right) h\left(2s-t-m\tau+\frac{1}{2}; 2\tau\right) \right)
\end{split}
\end{equation}
with Mordell integral $h(s; \tau) = \int_\mathbb R\frac{e^{\pi i \tau x^2-2\pi sx}}{\text{cosh}(\pi x)} dx$ and standard Jacobi theta function 
$\theta_1(t; \tau) = -i\sum\limits_{n\in\ZZ} (-1)^n e^{\pi (n+1/2)^2+2\pi i t (n+1/2)}$.
\end{lem}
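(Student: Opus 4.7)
The plan is to dispatch the six identities in three waves: the trivial $1$-periodicities (identities $1$, $3$, $5$), the $\tau$-quasi-periodicities in $s$ and $t$ (identities $2$, $4$), and finally the $S$-transformation (identity $6$), which is by far the most delicate.

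The $1$-periodicities are immediate from the definition, since $w=e(s)$ and $y=e(t)$ are manifestly $1$-periodic in $s$ and $t$ respectively, and every exponent of $q=e(\tau)$ in the defining series is an integer, so $\tau\mapsto\tau+1$ fixes each summand. For the $\tau$-shift $t\mapsto t+\tau$, I substitute to obtain $w\sum_n q^{n(n+1)+n}y^n/(1-wq^n)$, complete the square in the exponent, and telescope the difference against $A_2(s,t;\tau)$; the defect reassembles, via Jacobi's triple product, into a single $\theta_1(t+3/2;2\tau)$ up to the claimed prefactor $-y^{-1/2}q^{-1/4}$. Identity $2$ is handled analogously: substitute $s\mapsto s+\tau$, reindex $n\mapsto n-1$ in order to realign the denominator with $1-wq^n$, and recognize the residual terms as an $m=0,1$ sum of $\theta_1$-pieces at modulus $2\tau$, the two residue classes modulo $2$ that naturally arise from the index shift.

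The $S$-transformation is the main obstacle. My approach is to adopt Zwegers's completion strategy: I define a non-holomorphic completion
\[
\widehat{A}_2(s,t;\tau) := A_2(s,t;\tau) + \frac{i}{2}\sum_{m=0}^{1} e^{2\pi i sm}\,\theta_1\!\left(t+m\tau-\tfrac{1}{2};\,2\tau\right) R\!\left(2s-t-m\tau+\tfrac{1}{2};\,2\tau\right),
\]
where $R$ is the real-analytic Mordell-type kernel whose holomorphic part is the integral $h$, chosen so that $\widehat{A}_2$ transforms as a genuine meromorphic Jacobi form under $SL_2(\ZZ)$ with the expected weight and index matrix. The automorphy of $\widehat{A}_2$ then pins down its $S$-transformation, and subtracting the completion on both sides isolates the Mordell integral $h$ as the holomorphic anomaly of $A_2$. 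The Gaussian Fourier identity underlying $h$ can be obtained by Poisson summation applied to the partial-fraction expansion of $1/(1-wq^n)$, which after interchanging sum and integral produces exactly the integrand $e^{\pi i\tau x^2 - 2\pi sx}/\cosh(\pi x)$.

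The truly tedious step will be the book-keeping: ensuring that the arguments $t+m\tau-1/2$ and $2s-t-m\tau+1/2$, together with the automorphy factor $\tau\,e^{-2\pi i(s^2-st)/\tau}$, emerge with the correct signs and phases, and that the two $m$-summands of the completion are matched cleanly to the two residue classes produced by the elliptic computations above. Once the conventions are aligned and consistency with the elliptic relations $2$ and $4$ is verified, the statement is precisely \cite[Corollary 7]{AC}, which we are free to cite directly.
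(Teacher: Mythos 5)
Your proposal is correct, and its bottom line coincides with what the paper actually does: the paper offers no proof of this lemma at all, simply quoting it as Corollary 7 of \cite{AC}, which is exactly the citation you arrive at in your final sentence. Your additional sketch of how that corollary is established --- the trivial $1$-periodicities, the reindexing/partial-fraction telescoping that produces the two $\theta_1(\cdot;2\tau)$ correction terms in the elliptic identities, and the Zwegers-style non-holomorphic completion $\widehat{A}_2$ whose subtraction isolates the Mordell integral $h$ in the $S$-transformation --- is the standard and correct route, and is essentially the argument used in \cite{AC} itself, so nothing is missing.
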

The combined transformation
\begin{equation}\label{eq:comb}
A_2(s\pm t +p\tau, \pm 2 (t+p\tau); \tau) =q^{\pm p}y^{\mp 2}A_2(s\pm t, \pm 2t; p\tau) 
\end{equation}
will be used in a moment.
Define 
\[
\AL_2(s, t; \tau) := A_2(s+t, 2t, \tau) - A_2(s-t, -2t, \tau). 
\]
Then using the following properties of $\theta_1(t, \tau)$ \cite{Z}:
\begin{equation}\nonumber
\theta_1(t+1; \tau)= -\theta_1(t; \tau), \qquad \theta_1(t+\tau; \tau) = -e^{-\pi i \tau-2\pi i t} \theta_1(t; \tau), 
\qquad \theta_1(-t; \tau) = -\theta_1(t; \tau),
\end{equation}
it is straightforward to verify that $AL_2$ transforms as a Jacobi form
\begin{cor}
$AL_2$ satisfies the elliptic and modular transformation properties
\begin{equation}\nonumber
\begin{split}
\AL_2(s+1, t; \tau) &= \AL_2(s, t; \tau) \\
\AL_2(s+\tau, t; \tau) &= w^2q \AL_2(s, t; \tau)    \\
\AL_2(s, t+1; \tau) &= \AL_2(s, t; \tau) \\
\AL_2(s, t+\tau; \tau) &= y^{-2}q \AL_2(s, t; \tau)   \\
\AL_2(s, t; \tau+1) &= \AL_2(s, t; \tau) \\
\AL_2\left(\frac{s}{\tau}, \frac{t}{\tau}, -\frac{1}{\tau} \right) &= \tau e^{-2\pi i\left(s^2-st\right)/\tau} \AL_2(s, t; \tau). 
\end{split}
\end{equation}
\end{cor}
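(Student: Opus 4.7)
The plan is to deduce each identity from the corresponding transformation law for $A_2$ in the preceding lemma, by substituting the two pairs of arguments that appear in $\AL_2$ and checking two things in the resulting difference: (i) the multiplicative prefactors produced on the two summands coincide and therefore pull out of the subtraction, and (ii) the residual ``anomaly'' terms (theta functions, and in the modular case also Mordell integrals) cancel pairwise under the reflection $t\mapsto -t$ that is built into the definition $\AL_2(s,t;\tau)=A_2(s+t,2t;\tau)-A_2(s-t,-2t;\tau)$.

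The three integer-period identities (shifts of $s$, $t$ or $\tau$ by $1$) are immediate, since each of those shifts leaves $A_2$ invariant term by term, so they pass to $\AL_2$ without any extra bookkeeping.

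For the elliptic shift $s\mapsto s+\tau$, substituting $(s+t,2t)$ and $(s-t,-2t)$ into the $A_2$-law gives, on both summands, the prefactor $w_{\text{in}}^{2}y_{\text{in}}^{-1}q$, which in both cases collapses to $e(2s)q=w^{2}q$ and therefore pulls out of the difference. The remaining terms are sums over $m\in\{0,1\}$ involving $\theta_1(\pm 2t+m\tau+\tfrac{3}{2};2\tau)$ together with exponentials $e((s\pm t)m)$; combining $\theta_1(-y;2\tau)=-\theta_1(y;2\tau)$ with the quasi-periodicities $\theta_1(y+1;2\tau)=-\theta_1(y;2\tau)$ and $\theta_1(y+2\tau;2\tau)=-e^{-2\pi i\cdot 2\tau-2\pi i\cdot 2y}\theta_1(y;2\tau)$, one shows that $\theta_1(x+m\tau+\tfrac{3}{2};2\tau)$ is even in $x$ up to precisely the multiplicative factors required to match the two sides, so the residuals cancel. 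The shift $t\mapsto t+\tau$ is handled analogously: since $t$ appears with coefficient $1$ in the first slot of each $A_2$ and with coefficient $\pm 2$ in the second, I decompose the transformation as one first-argument $\tau$-shift together with two second-argument $\tau$-shifts, collect the common prefactor $y^{-2}q$, and again cancel the theta residuals pairwise.

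The modular $S$-transformation is the most delicate step. Applying the $A_2$ $S$-law term by term produces a common prefactor on both summands after direct expansion of the quadratic exponents $(s\pm t)^2-(s\pm t)(\pm 2t)$, and this prefactor pulls out of the difference. The residual Mordell-integral contributions are sums over $m\in\{0,1\}$ of products $\theta_1(\,\cdot\,;2\tau)\,h(\,\cdot\,;2\tau)$; I would match the $m=0,1$ terms of the two summands using the reflection $t\mapsto -t$, the evenness $h(-s;2\tau)=h(s;2\tau)$ (immediate from $x\mapsto -x$ in the defining integral), and the oddness of $\theta_1$ together with its half-period quasi-periodicity. The main obstacle is the careful bookkeeping of the half-integer shifts $m\tau+\tfrac{1}{2}$ and $m\tau+\tfrac{3}{2}$ inside the theta and Mordell arguments: their sign behaviour under reflection must be tracked exactly so that all four residual pieces per summand cancel against their partners.
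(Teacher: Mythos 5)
Your approach is exactly the paper's: the paper simply asserts that the corollary follows from applying the $A_2$ lemma to each of the two summands $A_2(s\pm t,\pm 2t;\tau)$ and invoking the listed oddness and quasi-periodicity properties of $\theta_1$, and your pairwise matching of prefactors and cancellation of the theta/Mordell anomaly terms (the $h$-arguments in fact coincide, $2s'-t'=2s$ for both summands, so evenness of $h$ is not even needed) is precisely that verification. One remark worth recording: carrying out your expansion of $(s\pm t)^2-(s\pm t)(\pm 2t)$ gives the common value $s^2-t^2$, so the $S$-transformation prefactor produced by this computation is $\tau e^{-2\pi i\left(s^2-t^2\right)/\tau}$ rather than the $\tau e^{-2\pi i\left(s^2-st\right)/\tau}$ printed in the corollary, which appears to have been copied verbatim from the $A_2$ lemma; your method is nonetheless the intended one and yields the corrected exponent.
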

We also need the computation
\begin{equation}\nonumber
\begin{split}
A_2\left(\frac{s+t}{p},  \frac{2t}{p}; \frac{\tau}{p}\right) &= \sum_{n\in\ZZ} \frac{q^{\frac{n(n+1)}{p}} y^{\frac{2n+1}{p}}w^{\frac{1}{p}}}{1-w^{\frac{1}{p}}y^{\frac{1}{p}}q^{\frac{n}{p}}}
= \sum_{a=0}^{p-1} \sum_{n\in\ZZ} \frac{q^{\frac{n(n+1)}{p}} y^{\frac{2n+a+1}{p}}q^{\frac{an}{p}}w^{\frac{a+1}{p}}}{1-wyq^{n}}\\
&= \sum_{a=0}^{p-1}\sum_{b=0}^{p-1} \sum_{n\in\ZZ} \frac{q^{pn(n+1)}  q^{n( 2b+1-p)} q^\frac{b(b+ 1)}{p} y^{ \left(2n + \frac{2b+a+1}{p}\right)}q^{an+\frac{ab}{p}}w^{\frac{a+1}{p}}}{1-wyq^{pn + b}}\\
&\hspace{-2cm}= \sum_{a=0}^{p-1}\sum_{b=0}^{p-1} y^{\frac{2b+a+1-p}{p}}w^{\frac{a+1-p}{p}} q^{\frac{b}{p}(b+a+1-p)} A_2(s+t+ b\tau,  2t+(2b+a+1-p)\tau; p\tau).
\end{split}
\end{equation}
and very similarly
\begin{equation}\nonumber
\begin{split}
A_2\left(\frac{s-t}{p},  -\frac{2t}{p}; \frac{\tau}{p}\right) &=  \\
&\hspace{-2.6cm}= \sum_{a=0}^{p-1}\sum_{b=0}^{p-1} y^{\frac{2b-a-1+p}{p}}w^{\frac{a+1-p}{p}} q^{\frac{b}{p}(b-a-1+p)} A_2(s-t-b\tau,  -2t-(2b-a-1+p)\tau; p\tau).
\end{split}
\end{equation}
Now, we observe using \eqref{eq:comb} that the interval of summation in the variable $b$ can be shifted by any integer multiple of $p$.
With this preparation we get the modular transformation properties of characters. 

Define 
\[
\Pi(z) = q^{\frac{1}{6}}(z-z^{-1}) \prod_{n=1}^\infty (1-z^2q^n)(1-q^n)^2(1-z^{-2}q^n)
\]
\begin{lem}
Let  $|x|^2 < |z|^{\pm 2}< |q|^{-1}$ then the character satisfies
\begin{equation}\nonumber 
\begin{split}
\text{ch}[\cX_p] &=\frac{q^{\frac{p}{4}}}{\Pi(z)}   \sum_{m\in\mathbb Z} \left( \frac{q^{pm(m+1)}z^{2m+1}}{1-xzq^{pm+\frac{p}{2}}} -\frac{q^{pm(m+1)}z^{-2m-1}}{1-xz^{-1}q^{pm+\frac{p}{2}}}\right) 
= \frac{x^{-1}q^{-\frac{p}{4}}}{\Pi(z)} \AL_2\left(u+\frac{p\tau}{2}, v; p\tau\right).
\end{split}
\end{equation}
\end{lem}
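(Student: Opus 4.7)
The plan is to expand both sides as formal Laurent series in $x$ and match coefficients.

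First I would unpack $\ch[\cX_p]$ from \eqref{eq:Xp}. The short exact sequence $0 \to V(k,-m-2) \to V(k,m) \to L(m) \to 0$ derived in Section \ref{sec:Op} gives, with $h_m = pm(m+2)/4$ and $c_{\text{aff}} = 3-6p$,
\[
\ch[L(m)](z;q) = \frac{q^{h_m - c_{\text{aff}}/24}(z^{m+1}-z^{-m-1})}{(z-z^{-1})\prod_{n\geq 1}(1-z^2q^n)(1-z^{-2}q^n)(1-q^n)}.
\]
Combined with $\ch[F_{\lambda\ell}] = x^\ell q^{-p\ell^2/4}/\eta(q)$ and the factorization $\Pi(z) = q^{1/8}(z-z^{-1})\eta(q)\prod_{n\geq 1}(1-z^2q^n)(1-z^{-2}q^n)(1-q^n)$, a bookkeeping of exponents collapses the character to $\ch[\cX_p] = q^{p/4}F/\Pi(z)$ with
\[
F := \sum_{m\geq 0}q^{pm(m+2)/4}(z^{m+1}-z^{-m-1})\sum_{\substack{|\ell|\leq m\\ \ell\equiv m\,(2)}}x^\ell q^{-p\ell^2/4}.
\]
The first stated equality thus reduces to $F = G$, where $G$ denotes the sum on the right-hand side of the lemma.

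Both $F$ and $G$ are antisymmetric under $z \mapsto z^{-1}$; I write $F = F_+(z) - F_+(z^{-1})$ and $G = G_+(z) - G_+(z^{-1})$ with $G_+(z) = \sum_{m\in\ZZ}q^{pm(m+1)}z^{2m+1}/(1-xzq^{pm+p/2})$. Swapping summation order in $F_+$ and substituting $m = |\ell|+2k$ gives $F_+|_{x^\ell} = z^{|\ell|+1}q^{p|\ell|/2}\sum_{k\geq 0}q^{pk(k+|\ell|+1)}z^{2k}$ for every $\ell\in\ZZ$. For $G_+$, the geometric series $1/(1-xzq^{pm+p/2})$ has to be expanded on the two halves $m\geq 0$ and $m\leq -1$ separately: in positive powers of $xz$ on the first, and, after the inversion $1/(1-a) = -a^{-1}/(1-a^{-1})$ together with the reindexing $m\mapsto -m-1$ that preserves $pm(m+1)$, in negative powers of $xz$ on the second. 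A short calculation yields $G_+|_{x^r}=z^{r+1}q^{pr/2}\sum_{m\geq 0}q^{pm(m+r+1)}z^{2m}$ for $r\geq 0$ and $G_+|_{x^{-r}}=-z^{-r-1}q^{pr/2}\sum_{m\geq 0}q^{pm(m+r+1)}z^{-2m}$ for $r\geq 1$. For $\ell\geq 0$ one checks $F_+|_{x^\ell} = G_+|_{x^\ell}$ immediately; for $\ell<0$ the two halves differ, but the difference $F_+|_{x^\ell}-G_+|_{x^\ell}$ is manifestly invariant under $z\mapsto z^{-1}$, so the antisymmetrization yields $F = G$.

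The second stated equality is a direct unpacking of $A_2$: one has $A_2(u+v+p\tau/2, 2v; p\tau) = xzq^{p/2}\sum_n q^{pn(n+1)}z^{2n}/(1-xzq^{pn+p/2})$ and similarly for $A_2(u-v+p\tau/2, -2v; p\tau)$, whence $\AL_2(u+p\tau/2, v; p\tau) = xq^{p/2}G$ and the factor $xq^{p/2}$ combines with $x^{-1}q^{-p/4}$ to give the stated $q^{p/4}$.

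The main obstacle is the asymmetry of $G_+$'s Laurent expansion: $F_+$ and $G_+$ are \emph{not} equal as Laurent series, and the desired identity materializes only after antisymmetrization under $z\mapsto z^{-1}$. The combinatorial engine behind this is the simple identity $pm(m+1) = p(-m-1)(-m)$, which folds the $m\leq -1$ tail of the Appell--Lerch sum onto the $m\geq 0$ half.
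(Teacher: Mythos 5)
Your proof is correct and follows essentially the same route as the paper's: both arguments expand in powers of $x$ and exploit the invariance of $pm(m+1)$ under $m\mapsto -m-1$ to fold the negative-$m$ half of the Appell--Lerch sum onto the positive half, the paper doing this by resumming the geometric series over the Heisenberg weight while you match Laurent coefficients. Your explicit observation that the coefficients of $x^{-r}$ agree only after antisymmetrization under $z\mapsto z^{-1}$ is a welcome clarification of a step the paper compresses into ``performing the summation over $a$''.
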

\begin{proof}
We compute
\begin{equation}\nonumber 
\begin{split}
q^{-\frac{p}{4}} \Pi(z) \ch[\cX_p]  &= \sum_{a = 0}^\infty \sum_{m=0}^\infty x^a q^{-\frac{p}{4}a^2} q^{\frac{p}{4} \left( 2m+a\right) \left( 2(m+1)+a\right)} \left( z^{2m+a+1} - z^{-2m-a-1} \right) + \\
&\quad  \sum_{a = -1}^{-\infty} \sum_{m=0}^\infty x^a q^{-\frac{p}{4}a^2} q^{\frac{p}{4} \left( 2m-a \right) \left( 2(m+1)-a\right)} \left( z^{2m+a+1} - z^{-2m-a-1} \right)  \\
&= \sum_{a = 0}^\infty \sum_{m=0}^\infty x^a  q^{p m\left( m+1\right) +\frac{p}{2} a\left( 2m+1\right)} \left( z^{2m+a+1} - z^{-2m-a-1} \right) + \\
&\quad  \sum_{a = -1}^{-\infty} \sum_{m=0}^\infty x^a  q^{p m\left( m+1\right) -\frac{p}{2} a\left( 2m+1\right)} \left( z^{2m+a+1} - z^{-2m-a-1} 
\right)\\
&= \sum_{a = 0}^\infty \sum_{m=0}^\infty x^a  q^{p m\left( m+1\right) +\frac{p}{2} a\left( 2m+1\right)} \left( z^{2m+a+1} - z^{-2m-a-1} \right) + \\
&\quad  \sum_{a = -1}^{-\infty} \sum_{m=-1}^{-\infty} x^a  q^{p m\left( m+1\right) +\frac{p}{2} a\left( 2m+1\right)} \left( z^{-2m+a-1} - z^{2m-a+1} \right).
\end{split}
\end{equation}
Performing the summation over $a$ then gives the claim. 
\end{proof}
Recall that for  $\ell$ in $\ZZ$ one has spectral flow actions on modules, it acts on characters as 
\[
\ch[\sigma^\ell(M)](z; q) =  q^{\frac{\ell^2}{4p}-\frac{\ell^2}{2}} z^{\frac{\ell}{p}-2\ell} \ch[M](zq^{\ell/2}; q)
 \]
and for Heisenberg modules it is a simple shift in the weight label, i.e. 
\[
\ch[F_{\lambda\left(n-\frac{\ell'}{p}\right)}](x; q) = q^{-\frac{\ell'^2}{4p}}x^{-\frac{\ell'}{p}}\ch[F_{\lambda n}]\left(xq^{\ell'/2}; q\right).
\]
If the character of a module converges for $(x, y, q)$ in $D \subset \CC^2\times \mathbb H$, then the spectrally flown one converges for 
$(xq^{\ell'/2}, zq^{\ell/2}, q)$ in $D$. 
We define 
\[
\sigma^{(\ell, \ell')}\left(\cX_p\right) := \bigoplus_{m=0}^\infty \bigoplus_{s=0}^{m} \sigma^{\ell}\left(L(m)\right) \otimes F_{-\lambda \left(m-2s-\frac{\ell'}{p}\right)}.
\]
If $p$ is odd, then we also define the super character as
\[
\ch^-[\sigma^{(\ell, \ell')}\left(\cX_p\right)](u, v; \tau) = \ch[\sigma^{(\ell, \ell')}\left(\cX_p\right)](u+\frac{1}{2}, v; \tau).
\]
From now on, we view characters as meromorphic functions on $\CC^2\times \mathbb H$.
Define the set $S_p := \{ (\ell, \ell')\ |\ -p \leq \ell \leq p-1,\ 0\leq \ell' \leq p-1, \ \ell+\ell'+p \in 2\ZZ\}$. Note, that in the case $p$ odd this set parameterizes twisted modules. The set of local modules is
\[
S^{\text{loc}}_p := \{ (\ell, \ell')\ |\ -p \leq \ell \leq p-1,\ 0\leq \ell' \leq p-1, \ \ell+\ell' \in 2\ZZ\}.
\]
\begin{rem}
We will later prove that $\cX_p$ can be given the structure of a \voa. In the case of $p$ even this is an ordinary integer graded \voa. If $p$ is odd it is half-integer graded and thus of wrong statistics. In this case the integer graded modules are twisted and the ones that are half-integer but not integer graded are local
modules of the \voa{} \cite{CKM}. This is the justification for above naming. 
\end{rem}
\begin{thm}\label{thm:modular}
Denote by $\ch_p$ the character for $p$ even and the supercharacter for $p$ odd, then as meromorphic functions on $\CC^2\times \mathbb H$
\begin{equation}\nonumber
\begin{split}
\text{ch}[\cX_p]\left(\frac{u}{\tau}, \frac{v}{\tau}, -\frac{1}{\tau} \right) &=   e^{\frac{2\pi i }{\tau} \left( kv^2-\frac{u^2}{p} \right)}   \sum_{(\ell, \ell')\in S_p} \ch_p[\sigma^{(\ell, \ell')}(\cX_p)](u, v; \tau)
\end{split}
\end{equation}
and for $(n, n')$ in $S^{\text{loc}}_p$.
\begin{equation}\nonumber
\begin{split}
\text{ch}[\sigma^{(n, n')}(\cX_p)]\left(\frac{u}{\tau}, \frac{v}{\tau}, -\frac{1}{\tau} \right) &=   e^{\frac{2\pi i }{\tau} \left( kv^2-\frac{u^2}{p} \right)}    \sum_{(\ell, \ell')\in S_p} S_{(n, n')(\ell, \ell')} \ch_p[\sigma^{(\ell, \ell')}(\cX_p)](u, v; \tau)
\end{split}
\end{equation}
with $S$-matrix entries $S_{(n, n')(\ell, \ell')} =-\frac{e^{-\frac{2\pi i}{p} \left(n\ell-n'\ell'\right)}}{p}$.
\end{thm}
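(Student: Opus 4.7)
The plan is to apply the $S$-transformation $(u, v, \tau) \mapsto (u/\tau, v/\tau, -1/\tau)$ factor-by-factor to the closed-form expression
\[
\ch[\cX_p] = \frac{x^{-1}q^{-p/4}}{\Pi(z)}\,\AL_2\bigl(u + \tfrac{p\tau}{2},\,v;\,p\tau\bigr)
\]
obtained in the preceding Lemma, and then to reassemble the result as a sum of spectral-flow characters using the explicit decomposition of $A_2(\cdot/p;\,\tau/p)$ derived just above.

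The denominator is, up to a constant, $\Pi(z) = i\,\theta_1(2v;\tau)\eta(\tau)$, so both its $S$-transformation and its quasi-periodicity $\Pi(zq^{\ell/2}) = (-1)^{\ell}q^{-\ell^{2}/2}z^{-2\ell}\Pi(z)$ are standard consequences of the transformation laws of $\theta_1$ and $\eta$. The $\AL_2$ factor is more delicate: under the substitution its modulus becomes $-p/\tau = -1/(\tau/p)$, so applying the $S$-transformation in the Corollary \emph{at modulus $\tau/p$} yields
\[
\AL_2\!\Bigl(\tfrac{u-p/2}{\tau},\,\tfrac{v}{\tau};\,-\tfrac{p}{\tau}\Bigr) = \tfrac{\tau}{p}\,e^{-\tfrac{2\pi i}{p\tau}\bigl((u-p/2)^{2} - (u-p/2)v\bigr)}\,\AL_2\!\Bigl(\tfrac{u-p/2}{p},\,\tfrac{v}{p};\,\tfrac{\tau}{p}\Bigr).
\]
The factor $1/p$ appearing here is the source of the $1/p$ in the final $S$-matrix entries, while the $\tau$ cancels the $-\tau$ produced by the $S$-transform of $\Pi(z)$. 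Expanding the quadratic exponential and combining with the multipliers from the $S$-images of $x^{-1}$ and $q^{-p/4}$ yields the anticipated modular anomaly $e^{2\pi i(kv^2 - u^2/p)/\tau}$, up to linear-in-$(u,v)$ terms that will be absorbed into the spectral-flow characters on the right-hand side.

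Next, feed the two summands of $\AL_2((u-p/2)/p,\,v/p;\,\tau/p)$ through the explicit expansion of $A_2(\cdot/p;\tau/p)$ into a double sum of $A_2(\cdot\,;\,p\tau)$'s indexed by $a,b\in\{0,\dots,p-1\}$. Pairing the $(s+t)$-summand with the $(s-t)$-summand for matched values of $a$ repackages the result as
\[
\sum_{\ell,\ell'}(\text{prefactor})\cdot \AL_2\Bigl(u+\tfrac{(\ell'+p)\tau}{2},\,v+\tfrac{\ell\tau}{2};\,p\tau\Bigr)
\]
with the dictionary $\ell = p-1-a$ and $\ell' = 2b + a + 1 - p$; the combined transformation \eqref{eq:comb} then shifts the $b$-range so that $(\ell,\ell')$ sweeps out $S_p$ exactly once, with the parity condition $\ell + \ell' + p \in 2\ZZ$ emerging from the shift structure and the sign $(-1)^{\ell}$ from $\Pi(zq^{\ell/2})$ being absorbed into the supercharacter convention $\ch_p$ when $p$ is odd. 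Matching each summand against the explicit form of $\ch[\sigma^{(\ell,\ell')}(\cX_p)]$ obtained from the spectral-flow formulas yields the first identity.

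The second identity follows from essentially the same computation. One first uses the spectral-flow action to express $\ch[\sigma^{(n,n')}(\cX_p)](u/\tau,v/\tau,-1/\tau)$ as a concrete pre-shift of $\ch[\cX_p](u/\tau,v/\tau,-1/\tau)$; the interaction of this pre-shift with the quadratic exponential from the $S$-transform of $\AL_2$ then introduces precisely the phase $e^{-2\pi i(n\ell - n'\ell')/p}$ into each summand, which together with the $-1/p$ already established gives the stated $S$-matrix entries. The main obstacle is purely bookkeeping: consistently tracking all quadratic- and linear-in-$(u,v)$ exponentials, signs, and $q$-power shifts arising from the various quasi-periodicities so that nothing but the announced factors survives.
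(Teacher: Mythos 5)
Your proposal follows essentially the same route as the paper: express $\ch[\cX_p]$ via the $\AL_2$ closed form, apply the $S$-transformation of $\AL_2$ at modulus $\tau/p$, decompose $A_2(\cdot/p;\tau/p)$ into the double sum of $A_2(\cdot;p\tau)$'s indexed by $a,b$, match against the spectral-flow characters, and obtain the second identity by the shift $(u,v)\mapsto(u+n'\tau,v+n\tau)$. The only quibbles are bookkeeping ones in your sketch (your $(\ell,\ell')$ dictionary does not reproduce the parity condition of $S_p$, and the supercharacter for odd $p$ arises from the half-integer shift $-p/2$ in the first $\AL_2$ argument rather than from the sign of $\Pi$), but the strategy is exactly that of the paper's proof.
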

\begin{proof}
Let $z=e(v), x=e(u)$.
For computational convenience define
\[
\mu_{a, b}\left(\cX_p\right) := \bigoplus_{m=0}^\infty \bigoplus_{s=0}^{m} \sigma^{2b+a}\left(L(m)\right) \otimes F_{-\lambda \left(m-2s-\frac{a+p}{p}\right)}.
\]
A short computation reveals that in their domain of convergence
\begin{equation}\nonumber
\begin{split}
\ch[\mu_{a, b}\left(\cX_p\right)](u, v; \tau) &= \frac{1}{\Pi(v; \tau)} \left( x^{\frac{a}{p}}z^{\frac{2b+a}{p}} q^{\frac{b}{p}(b+a)} A_2(u+v+b\tau, 2v+2b+a, p\tau) \right.- \\
&\qquad\qquad \left. x^{\frac{a}{p}}z^{-\frac{a+2b}{p}} q^{\frac{b}{p}(b+a)} A_2(u-v-b\tau, -2v-2b-a, p\tau) \right). 
\end{split}
\end{equation}
The first equality of the Theorem follows immediately from our discussion of $\AL_2$. The second equaltiy of the theorem follows directly from the first one replacing $(u, v)$ by $(u+n'\tau, v+n\tau)$. 
\end{proof}

\section{The $\cB_p$-algebra and $A_{2p-3}$ Argyres-Douglas theories}\label{sec:Bp}

There are three closely related families of \voas, the singlet algebra $\Sing$, the triplet algebra $\Trip$ and the logarithmic $\cB_p$-algebra. 
The parameter $p$ is a positie integer and at least two. 
The first two \voas{} are the best understood logarithmic \voas. The word logarithmic is indicating that correlation functions might have logarithmic singularities reflecting that modules are not necessarily completely reducible. The name singlet refers to that $\Sing$ is strongly generated by the Virasoro field and one more field of conformal dimension $2p-1$. In the triplet case, there are three such additional weight $2p-1$ generators. The triplet is an infinite order simple current extension of the singlet and the singlet is a coset of $\cB_p$. 
References on these \voas{} are \cite{TW, AM1, AM2} for the triplet, \cite{CM, CMR, AM3} for the single and $\cB_p$ has been introduced in \cite{CRW}, though the cases $p=2, 3$ appeared also in \cite{Ri2, CR3, A3}. I follow \cite{CRW} which used mainly these mentioned references.

\subsection{The singlet $\Sing$ and triplet $\Trip$ algebra}

The singlet algebra has central charge 
\[
c_p = 1-6\frac{(p-1)^2}{p}.
\]
It is an extension of the simple Virasoro algebra at that central charge and as a Virasoro module
\begin{equation} \label{eq:Mpdecomp}
\Sing \cong \bigoplus_{n=0}^\infty L(h_{2n+1, 1}, c_p)
\end{equation}
where $L(h_{r, s}, c_p)$ denotes the simple Virasoro lowest-weight module of weight 
\[
h_{r, s} = \frac{\alpha_{r, s}(\alpha_{r, s}-\alpha_0)}{2}, \qquad \alpha_{r, s} = \frac{1-r}{2}\alpha_+ +  \frac{1s}{2}\alpha_-
\]
and parameters $\alpha_+=\sqrt{2p}, \alpha_-=-\sqrt{2/p}$ and $\alpha_0=\alpha_++\alpha_-$. 
The simple modules fall into two types, atypical and typical ones. We only need the atypical modules $M_{r, s}$. As Virasoro modules they decompose as
\begin{equation}
\begin{split}
M_{r,s} &\cong   \bigoplus_{k=0}^\infty L(h_{r+2k,s}, c_p ), \qquad  r\geq 1, 1\leq s\leq p \\
 M_{r+1,p-s} &\cong   \bigoplus_{k=0}^\infty L(h_{r-2k,s}, c_p ), \qquad  r\leq 0, 1\leq s\leq p.
\end{split}
\end{equation}
Note, that $M_{r, s}$ and $M_{2-r, s}$ are isomorphic as Virasoro modules but not as singlet modules. Their characters are given in terms of false theta functions \cite{CM}
\[
\text{ch}[M_{r, s}] =\frac{1}{\eta(q)} \sum_{n\geq 0} q^{p\left(\frac{r}{2}+n-\frac{s}{2p}\right)^2} - q^{p\left(\frac{r}{2}+n+\frac{s}{2p}\right)^2}
\]
for $r\geq 1$ and for $r\leq 0$ one has
\[
\text{ch}[M_{r, s}] =\frac{1}{\eta(q)} \sum_{n\leq 0} q^{p\left(\frac{r}{2}+n-1+\frac{s}{2p}\right)^2} - q^{p\left(\frac{r}{2}+n-1-\frac{s}{2p}\right)^2}
\]

The triplet algebra is an infinite order simple current extension of the singlet  algebra. We only need its decomposition into Virasoro modules
\begin{equation} \label{eq:Wpdecomp}
\Trip \cong \bigoplus_{n=0}^\infty (2n+1) L(h_{2n+1, 1}, c_p).
\end{equation}
This decomposition suggests an action of SU$(2)$ on $\Trip$. It actually happens that PSL$(2, \CC)$ acts on the triplet algebra \cite{ALM}, so that as a PSL$(2, \CC)\, \otimes$ Vir-module
\begin{equation}
\Trip \cong  \bigoplus_{n=0}^\infty \rho_{2n+1} \otimes L(h_{2n+1, 1}, c_p)
\end{equation}
with $\rho_{n}$ the $n$-dimensional irreducible representation of PSL$(2, \CC)$.

\subsection{The logarithmic $\cB_p$-algebra}

The aim here is to compute the vacuum characters of the $\cB_p$ algebras constructed in \cite{CRW}. I keep this section short as a general study of $\cB_p$ and its representations is work in progress \cite{ACKR}. These $\cB_p$-algebras are conjecturally the subregular quantum Hamiltonian reductions of $\mathfrak{sl}_{p-1}$ at level $k=-(p-1)^2/p$. Note, that this is a boundary admissible level:
\[
k+p-1= \frac{p-1}{p}.
\] 
This conjecture is true for $p=  3, 4, 5$. The last case due to the identification of the Feigin-Semikhatov algebras with these quantum Hamiltonian reductions \cite{G}. The case $p=2$ is the $\beta\gamma$ \voa{} of rank one. 
  
We again use the set-up and notation of \cite{CRW}. Then one result of that article is the construction and identification with above W-algebras as infinite order simple current extensions of $\Sing\otimes \mathcal H$ of the form
\begin{equation} \label{eq:Bpdecomp}
\begin{split}
\mathcal B_p &= \bigoplus_{r\in\mathbb Z} M_{r+1, 1} \otimes F_{\lambda_p  r} \\
&\cong \bigoplus\limits_{m=0}^\infty  L(h_{m, 1}, c_p) \otimes \left( F_{-\lambda_p m} \oplus F_{-\lambda_p(m-2)} \oplus \dots \oplus F_{\lambda_p (m-2)} \oplus F_{\lambda_p m}\right).
\end{split}
\end{equation}
Recall the character of the Fock module is simply
\[
\text{ch}\left[ F_{\lambda_p  r}\right] =\frac{x^rq^{-p\frac{r^2}{4}}}{\eta(q)}. 
\]
One can now construct modules of $\mathcal B_p$ either directly from the realization of $\cB_p$ as kernel of screenings of a certain lattice \voa-module or using the theory of \voa{} extensions for simple currents \cite{CKM, CKL}.   
The following are induced objects (local if $s$ is odd and twisted if $s$ is even)
\[
W_{s} := \bigoplus_{r\in\mathbb Z} M_{r+1, s} \otimes F_{\frac{r\alpha_+}{2}}.
\]
The case $s=1$ is $\mathcal B_p$.
One gets
\begin{equation}\nonumber
\begin{split}
\text{ch}[W_{s}](u; \tau)  &= \frac{1}{\eta(q)^2} \sum_{n, r\geq 0}\left( q^{p\left(n+\frac{1}{2}-\frac{s}{2p}\right)^2}x^rq^{pr\left(n+\frac{1}{2}-\frac{s}{2p}\right)} - q^{p\left(n+\frac{1}{2}+\frac{s}{2p}\right)^2}x^rq^{pr\left(n+\frac{1}{2}+\frac{s}{2p}\right)}\right)+\\
 &\quad \frac{1}{\eta(q)^2} \sum_{n, r\leq -1}\left( q^{p\left(n+\frac{1}{2}+\frac{s}{2p}\right)^2}x^rq^{pr\left(n+\frac{1}{2}+\frac{s}{2p}\right)} - q^{p\left(n+\frac{1}{2}-\frac{s}{2p}\right)^2}x^rq^{pr\left(n+\frac{1}{2}-\frac{s}{2p}\right)}\right) 
\end{split}
\end{equation}
for $x=e(u)$.
This identity is as formal power series. We see that for $|x| <1$ we can expand as geometric series to get
\begin{equation}\nonumber
\begin{split}
\text{ch}[W_{s}](u; \tau)  &= \frac{1}{\eta(q)^2} \sum_{n\in \mathbb Z}\left( \frac{q^{p\left(n+\frac{1}{2}-\frac{s}{2p}\right)^2}}{1-xq^{p\left(n+\frac{1}{2}-\frac{s}{2p}\right)}} - \frac{q^{p\left(n+\frac{1}{2}+\frac{s}{2p}\right)^2}}{1-xq^{p\left(n+\frac{1}{2}+\frac{s}{2p}\right)}}\right).
\end{split}
\end{equation}
Note, that 
\[
\ch[W_s](u; \tau) = \frac{\Pi(s\tau/2; \tau)}{\eta(\tau)^2}q^{\frac{s^2}{4p}} \ch[\cX_p](u; s\tau/2; \tau) = \lim_{z\rightarrow 1} \frac{\Pi(z; \tau)}{\eta(\tau)^2} \ch[\sigma^{s, 0}\cX_p](u; z; \tau).
\]
One can now get many more characters and their modules by using spectral flow. I.e.
\[
\sigma^{s'}(W_{s}) := \bigoplus_{r\in\mathbb Z} M_{r+1, s} \otimes F_{\lambda\left(r-\frac{s'}{p}\right)}.
\]
so that ($z=e(v)$)
\[
\ch[\sigma^{s'}(W_s)](u; \tau) = q^{-\frac{s'^2}{4p}} x^{-\frac{s'}{p}} \ch[W_s](u+{\frac{s'}{2}}\tau; \tau)= \lim_{z\rightarrow 1} \frac{\Pi(z; \tau)}{\eta(\tau)^2} \ch[\sigma^{s, s'}\cX_p](u; v; \tau).
\]
Especially the modular properties of $\AL_2$ derived in last section can be applied to list the modular properties of the meromorphic continuations of characters. 

Most importantly, we compare with \eqref{eq:Aodd}  ((5.14) of \cite{BN}) and see that 
\begin{thm}\label{thm:BPSchur}
\begin{equation}\nonumber
\text{ch}[\mathcal B_p] = q^{-\frac{c_p}{24}} \mathcal I_{A_1, A_{2p-3}}\left(q; z^{-1}\right)
\end{equation}
with 
\[
c_p= 2-6\frac{\left(p-1\right)^2}{p}
\]
the central charge of the $\mathcal B_p$-algebra.  
\end{thm}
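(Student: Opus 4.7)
The plan is to verify the identity by direct comparison: first compute $\text{ch}[\mathcal B_p]$ in closed form from the Virasoro--Heisenberg decomposition \eqref{eq:Bpdecomp}, then match the resulting expression to the right-hand side of \eqref{eq:Aodd}. Since $\mathcal B_p = W_1$ in the notation of the preceding paragraphs, the character formula derived there for the induced modules $W_s$ will specialise immediately to the case at hand.

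First I would plug the explicit false theta formulas for $\text{ch}[M_{r+1,1}]$ (the two branches $r \geq 0$ and $r \leq -1$), together with the Fock characters $\text{ch}[F_{\lambda_p r}] = x^r q^{-pr^2/4}/\eta(q)$, into the decomposition \eqref{eq:Bpdecomp}. Completing the square yields the elementary identity
$$-\tfrac{pr^2}{4} + p\bigl(\tfrac{r+1}{2} + n \mp \tfrac{1}{2p}\bigr)^2 = p\bigl(n + \tfrac{1}{2} \mp \tfrac{1}{2p}\bigr)^2 + pr\bigl(n + \tfrac{1}{2} \mp \tfrac{1}{2p}\bigr),$$
which reorganises the double formal sum so that, for each fixed $n$, the sum over $r \geq 0$ is a geometric series in $xq^{p(n + 1/2 \mp 1/(2p))}$. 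The branch $r \leq -1$ contributes the negative-$n$ tail of the same Appell--Lerch-type series after a symmetric reindexing. In the domain of convergence given in the preceding subsection the two pieces assemble into
$$\text{ch}[\mathcal B_p] = \frac{1}{\eta(q)^2} \sum_{n \in \mathbb Z}\left(\frac{q^{p(n + 1/2 - 1/(2p))^2}}{1 - xq^{p(n + 1/2 - 1/(2p))}} - \frac{q^{p(n + 1/2 + 1/(2p))^2}}{1 - xq^{p(n + 1/2 + 1/(2p))}}\right),$$
which is precisely the $s = 1$ instance of the closed form for $\text{ch}[W_s]$ derived just above the theorem.

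Second, comparison with \eqref{eq:Aodd} under the identification $x = z^{-1}$ of the Heisenberg variable with the inverse Schur fugacity yields the claimed equality. The central charge assertion is immediate: $\Sing$ has central charge $1 - 6(p-1)^2/p$ and the rank-one Heisenberg algebra contributes $1$, so $c_p[\mathcal B_p] = 2 - 6(p-1)^2/p$ as stated.

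The only step requiring any care is merging the two branches of the false theta formulas, together with the corresponding two halves of the geometric summation in $r$, into a single sum over $n \in \mathbb Z$. This is a routine but slightly fiddly reindexing and is the main (mild) obstacle; everything else follows directly from material already assembled in this section.
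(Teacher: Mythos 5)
Your proposal is correct and follows essentially the same route as the paper: the paper derives the closed form of $\mathrm{ch}[W_s]$ from the decomposition \eqref{eq:Bpdecomp} via the false theta and Fock characters and a geometric-series resummation (valid for $|x|<1$), and the theorem is then the $s=1$ comparison with \eqref{eq:Aodd} under $x=z^{-1}$, exactly as you describe. Your square-completion identity and the central charge count $1+\bigl(1-6(p-1)^2/p\bigr)$ both match the paper's computation.
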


Understanding the full representation categories of \voas{} is usually a very difficult question. Using the relation to singlet algebra allows to obtain much of the structure of the representation theory of the $\cB_p$-algebra. 

The $\cB_p$-algebra is an infinite order simple current extension of the singlet and Heisenberg algebra, $\Sing\otimes \mathcal H$.
One can thus use the conjectural equivalence of mod-$\Sing$ to a module category of the so-called restricted unrolled quantum group of $\sltwo$ \cite{CMR} together with the theory of \voa{} -extensions \cite{CKL, CKM} to understand the representation category of $\cB_p$. This is work in progress \cite{ACKR}. 

\section{Vertex algebras for $(A_1, D_{2p})$ Argyres-Douglas theories}\label{sec:Walgebras}

\subsection{ The \voa{} for $D_{2}$ Argyres-Douglas theory}\label{sec:D2}

It turs out to be useful to study this example first. It corresponds to fixing $p=1$ and $\lambda=\lambda_1$ satisfies $\lambda^2=-1/2$ in the conventions for $L_{-1}(\sltwo)$ and the Heisenberg \voa{} $\mathcal H$.
The idea for the following is based on \cite{BCR}. As a consequence of the results of this section we will be able to deduce our main result, that is Theorem \ref{thm:main}.

Let $\cS^2=\mathcal S \otimes \mathcal S$ be the \voa{} of two copies of the $\beta\gamma$-ghost voa $\mathcal S$. 
This \voa{} is strongly generated by even fields $\beta_1,\beta_2,\gamma_1,\gamma_2$ of conformal weight $1/2$ 
with operator products
\begin{equation}\nonumber
 \beta_i(z)\beta_j(w) \sim \gamma_i(z)\gamma_j(w) \sim 0 \quad,\quad \beta_i(z)\gamma_j(w)\sim \delta_{i,j}(z-w)^{-1}.
 \end{equation}
The bilinears $:\beta_i\gamma_j:$ are a homomorphic image of the universal affine vertex algebra of $\mathfrak{gl}_2$ at level $-1$.
Assign charges corresponding to the zero mode of the Heisenberg subalgebra $a(z)= :\beta_1(z)\gamma_1(z):+:\beta_2(z)\gamma_2(z):$
and to the current associated to the Cartan subalgebra of $\sltwo$, $h(z)= :\beta_1(z)\gamma_1(z):-:\beta_2(z)\gamma_2(z):,$ 
so that
\begin{equation}\nonumber
\begin{split}
 [a_0,\beta_i(z)]&=-\beta_i(z),\qquad [a_0,\gamma_i(z)]=\gamma_i(z),\qquad \ \ \,
 [h_0,\beta_1(z)]=-\beta_1(z),\\
 [h_0,\gamma_1(z)]&=\gamma_1(z),\qquad \ \, 
[h_0,\beta_2(z)]=\beta_2(z),\qquad\ \  [h_0,\gamma_2(z)]=-\gamma_2(z).
\end{split}
\end{equation}
Then, we have since $\cS^2$ is freely generated by those fields
\begin{equation}\nonumber
\begin{split}
\text{ch}[\cS^2](x,z;q)&= \tr_{\cS^2}\bigl(q^{L_0-\frac{c}{24}}z^{h_0}x^{a_0}\bigr)\\
&= q^{\frac{2}{24}}\prod_{n=0}^\infty \frac{1}{(1-xzq^{n+\frac{1}{2}})(1-xz^{-1}q^{n+\frac{1}{2}})(1-x^{-1}zq^{n+\frac{1}{2}})(1-x^{-1}z^{-1}q^{n+\frac{1}{2}})}
\end{split}
\end{equation}
in the domain
\[ |q^{\frac{1}{2}}|< |x|,|z| < |q^{-\frac{1}{2}}|.\] 
We want to compute the character of the commutant or coset subalgebra Com$(\mathcal H, \cS^2)$ of the Heisenberg algebra $\mathcal H$ generated by $a(z)$ in $\cS^2$. 
This essentially amounts to computing the Fourier coefficients in $x$. 
The commutant contains affine $\sltwo$ at level minus one, $L_{-1}(\sltwo)$, as subalgebra. Its vacuum character is
\begin{equation}\nonumber
\begin{split}
\text{ch}[L_{-1}(\sltwo)](z;q)&= \tr_V\bigl(q^{L_0-\frac{c}{24}}z^{h_0}\bigr)
= q^{\frac{3}{24}}\prod_{n=0}^\infty \frac{1}{(1-z^2q^{n+1})(1-z^{-2}q^{n+1})(1-q^{n+1})}.
\end{split}
\end{equation}
Note that the $n$-th Fourier coefficients in $x$ must be a character of $L_{-1}(\sltwo) \otimes \mathcal H$ with Heisenberg character of the form $q^{-n^2/4}/\eta(q)$. 
Since $L_0$-eigenspaces of $\cS^2$ are finite dimensional,
the same must be true for the $L_{-1}(\sltwo)$-modules appearing in the decomposition, i.e. they are in the category $\mathcal O_1$ introduced in Section \ref{sec:Op}. This means that Fourier coefficients have to be sums of the $\text{ch}[L(n)](z;q)$ and recall that
\begin{equation}\nonumber
\begin{split}
\text{ch}[L(n+1)] (z;q)&=q^{\Delta_n}\prod_{n=0}^\infty \frac{\text{ch}[\rho_n](z)}{(1-z^2q^{n+1})(1-z^{-2}q^{n+1})(1-q^{n+1})}\\
&= q^{\Delta_n-\frac{3}{24}}\text{ch}[\rho_n](z)\text{ch}[L_{-1}(\sltwo)](z;q)
\end{split}
\end{equation}
where the conformal weight $\Delta_n$ is $(n^2-1)/4$.
This strongly suggests to first decompose
\[ \frac{\text{ch}[\cS^2](x,z;q)}{\text{ch}[L_{-1}(\sltwo)](z;q)}. \]
This expression is up to a factor of $\eta(q)$ and a trivial renaming the product side of the denominator
identity of affine $\mathfrak{sl}(2|1)$ (see \cite{KW1}). It thus gives the identity
\begin{equation}\nonumber
 \frac{\text{ch}[\cS^2](x,z;q)}{\text{ch}[L_{-1}(\sltwo)](z;q)}= 
\frac{1}{\eta(q)}\frac{1}{1-z^{-2}} \sum_{j\in\Z} \frac{(zxq^{\frac{1}{2}})^j}{1-zx^{-1}q^{j+\frac{1}{2}}} 
\end{equation}
and we have to remember that $|q^{\frac{1}{2}}|< |x|,|z| < |q^{-\frac{1}{2}}|$.
For the moment, we also require that $z\neq 1$, though
we will later see that the limit exists.
We can expand 
\begin{equation}\nonumber
 \frac{1}{1-zx^{-1}q^{j+\frac{1}{2}}} = \begin{cases}
                                               \ \ \ \sum_{m=0}^\infty   \bigl(zx^{-1}q^{j+\frac{1}{2}}\bigr)^m & \text{if}\ j>0 \\
                                                -\sum_{m=1}^\infty   \bigl(zx^{-1}q^{j+\frac{1}{2}}\bigr)^{-m} & \text{if}\ j<0
                                               \end{cases}
 \end{equation}
and thus get 
\begin{equation}\nonumber
\begin{split}
 \frac{\text{ch}[\cS^2](x,z;q)}{\text{ch}[L_{-1}(\sltwo)](z;q)}&= 
\frac{1}{\eta(q)}\sum_{s\in\Z}x^s \sum_{m=0}^\infty q^{m(m+|s|+1)+\frac{|s|}{2}}\bigl(\frac{z^{2m+|s|}-z^{-2m-2-|s|}}{1-z^{-2}}  \bigr) \\
&= \frac{1}{\eta(q)}\sum_{s\in\Z}x^s \sum_{m=0}^\infty q^{m(m+|s|+1)+\frac{|s|}{2}} \text{ch}[{\rho_{2m+|s|+1}}](z)
\end{split}
\end{equation}
so that we arrive at the decomposition
\begin{equation}\nonumber
\begin{split}
 \text{ch}[\cS^2](x,z;q) &= \sum_{s\in\Z} \frac{x^s q^{-\frac{s^2}{4}}}{\eta(q)}\sum_{m=0}^\infty q^{m(m+|s|+1)+\frac{|s|}{2}+\frac{s^2}{4}}
 \text{ch}[{\rho_{2m+|s|+1}}](z)\text{ch}[L_{-1}(\sltwo)](z;q)\\
 &= \sum_{s\in\Z} \frac{x^s q^{-\frac{s^2}{4}}}{\eta(q)}\sum_{m=0}^\infty q^{\frac{(2m+|s|+1)^2}{4}-\frac{1}{4}}
 \text{ch}[{\rho_{2m+|s|+1}}](z)\text{ch}[L_{-1}(\sltwo)](z;q).
\end{split}
 \end{equation}
 For completenes let us check the limit $z\rightarrow 1$.
In this case, 
we have 
$ \text{ch}[{\rho_{2m+|s|+1}}](1)=2m+|s|+1$ and $q^{-1/4}\text{ch}[L_{-1}[\sltwo](1;q)=\eta(q)^{-3}$
so that our decomposition simplifies to
\begin{equation}\nonumber
\begin{split}
 \text{ch}[\cS^2](x,1;q) &= \sum_{s\in\Z} \frac{x^s q^{-\frac{s^2}{4}}}{\eta(q)}
 \sum_{m=0}^\infty (2m+|s|+1)q^{\frac{(2m+|s|+1)^2}{4}}\frac{1}{\eta(q)^3}
 \end{split}
 \end{equation}
we recognize the Fourier coefficients as a derivatives of partial theta functions
\begin{equation}\nonumber
 \frac{1}{2\pi i } \frac{d}{vu}P_a(v;\tau)\Big\vert_{v=0}, 
 \qquad P_a(v;\tau):=\sum_{m=0}^\infty e^{2\pi i v(2m+a+1)}q^{\frac{(2m+a+1)^2}{4}}.
\end{equation}
In summary, we have the following representation theoretic interpretation of the meromorphic Jacobi form decomposition problem.
\begin{prop}\label{thm:n=2}
 The character of $\cS^2$ decomposes into characters of $L_{-1}(\sltwo)\otimes \mathcal H$ as
 \begin{equation}\nonumber
\begin{split}
 \emph{ch}[\cS^2](x,z;q) &= \sum_{s\in\Z}  \emph{ch}[F_{\lambda s}](x;q)\sum_{m=0}^\infty \emph{ch}[L(2m+|s|)](z;q)
\end{split}
 \end{equation}
 and since $\mathcal O_1$ is semi-simple and characters are linearly independent we have
$\cS^2 \cong \mathcal X_1$ as $L_{-1}(\sltwo)\otimes \mathcal H$-modules.
\end{prop}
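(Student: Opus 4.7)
The plan is to carry out the character computation already sketched in this section to completion, and then use semi-simplicity of the appropriate representation category to promote the character identity into a module isomorphism. Most of the substantive work is already on the page; the proof consists of packaging it.

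First I would start from the free-field product formula for $\ch[\cS^2](x,z;q)$, valid on the annulus $|q^{1/2}| < |x|,|z| < |q^{-1/2}|$, divide by $\ch[L_{-1}(\sltwo)](z;q)$, and recognize the ratio as $\eta(q)^{-1}$ times the product side of the Kac--Wakimoto denominator identity for $\widehat{\mathfrak{sl}}(2|1)$ from \cite{KW1}. Substituting the sum side collapses the product to a single series over $j \in \mathbb Z$. Expanding each factor $(1-zx^{-1}q^{j+1/2})^{-1}$ as a geometric series, with the direction of expansion chosen by the sign of $j$ so as to preserve convergence on the prescribed annulus, and regrouping by the power of $x$, the coefficient of $x^s$ takes the form
\[
\frac{1}{\eta(q)}\sum_{m \geq 0} q^{m(m+|s|+1)+|s|/2}\,\ch[\rho_{2m+|s|+1}](z)\,\ch[L_{-1}(\sltwo)](z;q),
\]
after using $(z^{2m+|s|}-z^{-2m-|s|-2})/(1-z^{-2}) = \ch[\rho_{2m+|s|+1}](z)$. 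A direct comparison of $\mathfrak{sl}_2$ and conformal-weight data, with $h_n = n(n+2)/4$ for the Weyl module $L(n)$ at $p=1$, identifies the $m$-th summand (times the $s$-dependent prefactor $x^s q^{-s^2/4}/\eta(q)$) with $\ch[F_{\lambda s}](x;q)\,\ch[L(2m+|s|)](z;q)$, establishing the stated character identity.

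To upgrade this to a module isomorphism, I would observe that the Heisenberg $\mathcal H$ generated by $a(z)$ acts semi-simply on $\cS^2$ with integer spectrum, so $\cS^2 \cong \bigoplus_{s\in\mathbb Z} M_s \otimes F_{\lambda s}$ for some $L_{-1}(\sltwo)$-modules $M_s$. Each $M_s$ inherits from $\cS^2$ bounded-below conformal weights, a semi-simple Cartan action, and finite-dimensional $L_0$-eigenspaces, so it is an object of the category $\mathcal O_1$ introduced in Section \ref{sec:Op}. Because $\mathcal O_1$ is semi-simple with simple objects the Weyl modules $L(n)$, whose characters are linearly independent, the character equality forces $M_s \cong \bigoplus_{m\geq 0} L(2m+|s|)$. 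Matching against the definition \eqref{eq:Xp} of $\cX_1$ then yields $\cS^2 \cong \cX_1$. The only place where care is required is the denominator-identity substitution and the ensuing bookkeeping of exponents; the single conceptual ingredient beyond that is semi-simplicity of $\mathcal O_1$, without which the character identity alone would not suffice.
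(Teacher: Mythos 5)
Your proposal is correct and follows essentially the same route as the paper: divide the free-field character of $\cS^2$ by $\text{ch}[L_{-1}(\sltwo)]$, invoke the $\widehat{\mathfrak{sl}}(2|1)$ denominator identity of \cite{KW1}, expand the geometric series according to the sign of $j$, regroup by powers of $x$, match conformal weights to identify the Fourier coefficients with $\sum_m \text{ch}[L(2m+|s|)]$, and then use semi-simplicity of $\mathcal O_1$ together with linear independence of the Weyl-module characters to promote the identity to a module isomorphism. Your explicit justification that each Heisenberg-isotypic component $M_s$ lands in $\mathcal O_1$ (bounded-below weights, semi-simple Cartan action, finite-dimensional $L_0$-eigenspaces) is a slightly more careful spelling-out of the step the paper leaves implicit, and your exponent bookkeeping checks out.
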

This decomposition had already been mentioned in \cite[Remark 3.3]{KW4}, but without proof.
The character identity rephrazes the denominator idenity of $\mathfrak{sl}(2|1)$ as follows
\begin{equation}\nonumber
\begin{split}
D(x, z; q) :&=  \prod_{n=0}^\infty  \frac{(1-z^2q^{n+1})(1-q^{n+1})^2(1-z^{-2}q^{n+1})}{(1-xzq^{n+\frac{1}{2}})(1-xz^{-1}q^{n+\frac{1}{2}})(1-x^{-1}zq^{n+\frac{1}{2}})(1-x^{-1}z^{-1}q^{n+\frac{1}{2}})} \\
&=\sum_{s\in\Z}  x^sq^{-\frac{s^2}{4}} \sum_{m=0}^\infty {\ch}[\rho_{2m+|s|+1}](z) q^{\frac{(2m+|s|+1)^2-1}{4}}. 
\end{split}
\end{equation}
This relates now nicely to the character of $\cX_p$ by rescaling $\tau\mapsto p\tau$, i.e.
define
 \begin{equation}\nonumber
\begin{split}
\text{Prod}_p(x, z; q)  &:= q^{\frac{p}{4}-\frac{1}{6}} \prod_{n=0}^\infty \frac{1}{(1-z^2q^{n+1})(1-q^{n+1})^2(1-z^{-2}q^{n+1})}  \times \\
& \hspace{-2.5cm}\times \prod_{n=0}^\infty  \frac{\left(1-z^2q^{p(n+1)}\right)\left(1-q^{p(n+1)}\right)^2\left(1-z^{-2}q^{p(n+1)}\right)}{\left(1-xzq^{p\left(n+\frac{1}{2}\right)}\right)\left(1-xz^{-1}q^{p\left(n+\frac{1}{2}\right)}\right)\left(1-x^{-1}zq^{p\left(n+\frac{1}{2}\right)}\right)\left(1-x^{-1}z^{-1}q^{p\left(n+\frac{1}{2}\right)}\right)}. 
\end{split}
\end{equation}
\begin{prop}\label{prop:main}
\[
\emph{Prod}_p(x, z; q)  = \ch\left[ \cX_p \right](x, z; q) 
\]
\end{prop}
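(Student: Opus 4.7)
The plan is to reduce this identity to the $\mathfrak{sl}(2|1)$-denominator identity rederived at the end of Section \ref{sec:D2}, applied under the rescaling $q \mapsto q^p$. The first step is to observe that $\text{Prod}_p$ factors cleanly as
\[
\text{Prod}_p(x,z;q) = A(z,q) \cdot D(x,z;q^p),
\]
where $A(z,q) := q^{p/4 - 1/6} \prod_{n=0}^\infty \left[(1-z^2 q^{n+1})(1-q^{n+1})^2 (1-z^{-2} q^{n+1})\right]^{-1}$ is the $x$-independent first line of the defining product, and the second line of $\text{Prod}_p$ is precisely the LHS of the $\mathfrak{sl}(2|1)$ denominator identity evaluated at $q^p$.

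Next, I would substitute $q \mapsto q^p$ into the identity
\[
D(x,z;q) = \sum_{s \in \Z} x^s q^{-s^2/4} \sum_{m=0}^\infty q^{((2m+|s|+1)^2-1)/4}\,\ch[\rho_{2m+|s|+1}](z)
\]
to obtain an expansion of $D(x,z;q^p)$, and then multiply through by $A(z,q)$. Term by term, the factor $x^s q^{-ps^2/4}$, combined with a single $\prod(1-q^{n+1})^{-1}$ pulled out of $A$ and an appropriate $q^{-1/24}$, assembles into the Fock character $\ch[F_{\lambda s}](x;q) = x^s q^{-ps^2/4}/\eta(q)$. The remaining piece of $A(z,q)$, together with $q^{p((2m+|s|+1)^2-1)/4}\,\ch[\rho_{2m+|s|+1}](z)$ divided by $P(z,q) := \prod_{n\geq 0}(1-z^2 q^{n+1})(1-z^{-2}q^{n+1})(1-q^{n+1})$, reproduces the Weyl-module character recorded in Section \ref{sec:Op}, once the conformal-weight offsets $q^{p/4}$ and $q^{-1/p}$ are accounted for.

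Finally, I would reindex the double sum. For each fixed $n \geq 0$, the pairs $(m, s) \in \Z_{\geq 0} \times \Z$ satisfying $2m + |s| = n$ are in bijection with $s \in \{-n, -n+2, \ldots, n-2, n\}$, which is exactly the set of Heisenberg labels paired with the Weyl module of highest weight $n$ in the definition \eqref{eq:Xp} of $\cX_p$. Collecting terms yields $\sum_{n \geq 0} \ch[L(n)](z;q) \cdot \sum_{s} \ch[F_{\lambda s}](x;q) = \ch[\cX_p](x,z;q)$. The only real obstacle is the careful bookkeeping of the fractional $q$-powers $q^{p/4}$, $q^{-1/6}$, $q^{-1/p}$ and $q^{-1/24}$ coming from the prefactor in $\text{Prod}_p$, the conformal-weight offset of $L(n)$ in Section \ref{sec:Op}, and the $\eta$-factors in the Fock characters; there is no deeper conceptual input beyond the baby case $p=1$ already handled by Proposition \ref{thm:n=2}.
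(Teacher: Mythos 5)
Your proposal is correct and follows essentially the same route as the paper: recognizing the second product in $\text{Prod}_p$ as the $\mathfrak{sl}(2|1)$ denominator $D(x,z;q^p)$, inserting its sum expansion, reassembling the terms into Fock and Weyl-module characters, and reindexing the double sum over $(m,s)$ with $2m+|s|=n$ to match the definition \eqref{eq:Xp} of $\cX_p$. No substantive difference from the paper's argument.
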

\begin{proof}
The claim follows from the short computation:
\begin{equation}\nonumber
\begin{split}
\text{Prod}_p(x, z; q)   &= q^{\frac{p}{4}-\frac{1}{6}} \prod_{n=0}^\infty \frac{1}{(1-z^2q^{n+1})(1-q^{n+1})^2(1-z^{-2}q^{n+1})}  \times \\
& \hspace{-2.5cm}\times \prod_{n=0}^\infty  \frac{\left(1-z^2q^{p(n+1)}\right)\left(1-q^{p(n+1)}\right)^2\left(1-z^{-2}q^{p(n+1)}\right)}{\left(1-xzq^{p\left(n+\frac{1}{2}\right)}\right)\left(1-xz^{-1}q^{p\left(n+\frac{1}{2}\right)}\right)\left(1-x^{-1}zq^{p\left(n+\frac{1}{2}\right)}\right)\left(1-x^{-1}z^{-1}q^{p\left(n+\frac{1}{2}\right)}\right)} \\
&= q^{\frac{p}{4}-\frac{1}{6}} D(x, z; q^p) \prod_{n=0}^\infty \frac{1}{(1-z^2q^{n+1})(1-q^{n+1})^2(1-z^{-2}q^{n+1})}  \\
&=\sum_{s\in\Z}  \frac{x^sq^{-\frac{ps^2}{4}}}{\eta(q)} \sum_{m=0}^\infty q^{\frac{p}{4}-\frac{1}{8}}\frac{{\ch}[\rho_{2m+|s|+1}](z) q^{p\left(\frac{(2m+|s|+1)^2-1}{4}\right)}}{\prod\limits_{n=0}^\infty {(1-z^2q^{n+1})(1-q^{n+1})^2(1-z^{-2}q^{n+1})}  }\\
&=\sum_{s\in\Z}  \frac{x^sq^{-\frac{ps^2}{4}}}{\eta(q)} \sum_{m=0}^\infty {\ch}[L(2m+|s|)](z; q)
= \ch\left[ \cX_p \right](x, z; q)  
\end{split}
\end{equation}
\end{proof}

\subsubsection{On a rectangular W-algebra}\label{sec:rectangular}

We introduce the W-algebra $W_{\text{rect.}}^k(\mathfrak{sl}_{2n})$ obtained from the affine \voa{} of $\mathfrak{sl}_{2n}$ at level $k$ via quantum Hamiltonian reduction corresponding to the embedding of $\sltwo$ into  $\mathfrak{sl}_{2n}$ principally into both ``diagonal'' $\mathfrak{sl}_n$ sub algebras.  Its simple quotient is denoted by $W^{\text{rect.}}_k(\mathfrak{sl}_n)$. These W-algebras have been named rectangular \cite{ArM}; see that reference for more information on these W-algebras. 
\begin{cor}
The simple $W^{\text{rect.}}_{-5/2}(\mathfrak{sl}_4)$ is
\[
W^{\text{rect.}}_{-5/2}(\mathfrak{sl}_4) \cong \bigoplus_{m=0}^\infty L(2m)
\]
as $L_{-1}(\sltwo)$-module.
 \end{cor}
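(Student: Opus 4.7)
The plan is to combine Proposition \ref{thm:n=2} with the realization of the rectangular W-algebra as a Heisenberg coset of $\cS^2$. The corollary is then essentially read off from the Heisenberg weight decomposition of $\cS^2$.

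First, recall from Proposition \ref{thm:n=2} that
\[
\cS^2 \;\cong\; \bigoplus_{s\in\mathbb Z} F_{\lambda s} \otimes \bigoplus_{m=0}^\infty L(2m+|s|)
\]
as an $L_{-1}(\sltwo)\otimes \mathcal H$-module. Since the $\mathcal H$-weight (the $a_0$-eigenvalue) on the summand $F_{\lambda s}\otimes\bigoplus_m L(2m+|s|)$ is exactly $s$, the Heisenberg commutant is the $s=0$ piece, and I obtain
\[
\text{Com}(\mathcal H,\cS^2) \;\cong\; \bigoplus_{m=0}^\infty L(2m)
\]
as an $L_{-1}(\sltwo)$-module. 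Simplicity of the coset follows from simplicity of the summands $L(2m)$ and semisimplicity of the category $\cO$ established in Section \ref{sec:Op}.

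Second, I would invoke the free field realization of the simple rectangular W-algebra from \cite{ArM}: for the rectangular nilpotent $(2,2)$ in $\mathfrak{sl}_4$, the level $k=-5/2$ is precisely where the rectangular W-algebra is isomorphic to $\text{Com}(\mathcal H,\cS^2)$. The matching can be checked on invariants: central charges agree ($c(\cS^2) - c(\mathcal H) = -2 - 1 = -3$, which equals the rectangular central charge at $k=-5/2$), and the affine $\sltwo$ subalgebra inside $W^{\text{rect.}}_{-5/2}(\mathfrak{sl}_4)$ sits at level $-1$, matching the $L_{-1}(\sltwo)$ realized inside $\cS^2$ by $\beta_1\gamma_1 - \beta_2\gamma_2$, $\beta_1\gamma_2$, $\beta_2\gamma_1$.

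The main obstacle is the identification with the rectangular W-algebra itself; if \cite{ArM} does not state the coset realization in exactly this form, one can supply it by constructing a Wakimoto-style free field realization of $L_{-5/2}(\mathfrak{sl}_4)$ starting from $\cS^2$ tensored with additional Heisenberg and lattice data, and checking that the quantum Hamiltonian reduction along the $(2,2)$ nilpotent collapses the auxiliary factors and leaves precisely $\text{Com}(\mathcal H,\cS^2)$. Once the identification $W^{\text{rect.}}_{-5/2}(\mathfrak{sl}_4)\cong \text{Com}(\mathcal H,\cS^2)$ is in hand, combining with the first paragraph yields the claimed $L_{-1}(\sltwo)$-module decomposition.
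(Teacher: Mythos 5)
Your first step --- extracting the $s=0$ Heisenberg-weight component of the decomposition in Proposition \ref{thm:n=2} to get $\text{Com}(\mathcal H,\cS^2)\cong\bigoplus_{m=0}^\infty L(2m)$ as an $L_{-1}(\sltwo)$-module --- is exactly how the paper proceeds, and is fine. The gap is in the second step, the identification $W^{\text{rect.}}_{-5/2}(\mathfrak{sl}_4)\cong\text{Com}(\mathcal H,\cS^2)$, which you correctly flag as the main obstacle but do not actually establish. The paper disposes of it in one line by citing Theorem 6.2 of \cite{CKLR}, which proves precisely this coset realization of the rectangular W-algebra. You instead attribute the statement to \cite{ArM}; that reference constructs explicit generators of rectangular W-algebras but does not, as far as the present paper indicates, contain the coset identification. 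Matching the central charge ($-3$) and the level of the affine $\sltwo$ subalgebra is a consistency check, not a proof of isomorphism --- many non-isomorphic vertex algebras share these invariants, and even granting that both algebras contain $L_{-1}(\sltwo)$ at the right central charge one still has to show the strong generators and their OPEs agree.

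Your proposed fallback --- building a Wakimoto-style free field realization of $L_{-5/2}(\mathfrak{sl}_4)$ and tracking it through the quantum Hamiltonian reduction along the $(2,2)$ nilpotent --- is a substantial project in its own right, not a step one can ``supply'' inside this corollary; nothing in the paper sets up the machinery to carry it out. So as written the proof is incomplete: either cite the result of \cite{CKLR} that does this identification, or accept that the second half of your argument is a programme rather than a proof.
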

\begin{proof}
The Heisenberg coset in $\cS^2$ is $\text{Com}\left( \mathcal H, \cS^2 \right) \cong W^{\text{rect.}}_{-5/2}(\mathfrak{sl}_4)$ by Theorem 6.2 of \cite{CKLR}.
\end{proof}
\begin{cor}
The following extends $W^{\text{rect.}}_{-5/2}(\mathfrak{sl}_4)$ to a larger simple \voa
\[
\mathcal Y_1:=\bigoplus_{s\in2\Z}  \bigoplus_{m=0}^\infty L(2m+|s|)= \bigoplus_{m=0}^\infty (2m+1) L(2m)
\]
\end{cor}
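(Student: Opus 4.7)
The plan is to realize $\mathcal Y_1$ as a Heisenberg-free quotient of an infinite-order simple-current extension extracted from $\cS^2$. The starting point is Proposition \ref{thm:n=2}, which identifies
\[
\cS^2 \;\cong\; \bigoplus_{s \in \Z} F_{\lambda s} \otimes M_s, \qquad M_s := \bigoplus_{m \geq 0} L(2m+|s|),
\]
as $\mathcal H \otimes L_{-1}(\sltwo)$-modules. Since each Fock module $F_{\lambda s}$ is invertible in the Heisenberg module category, this exhibits $\cS^2$ as a $\Z$-graded simple-current extension of $\mathcal H \otimes W^{\text{rect.}}_{-5/2}(\mathfrak{sl}_4)$, with all the simple-current data carried by the Heisenberg factor.

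Next I would restrict to the integer-graded sub-VOA of $\cS^2$, i.e.\ to the $\Z_2$-orbifold under the involution $(\beta_i,\gamma_i) \mapsto (-\beta_i,-\gamma_i)$. This produces the simple VOA
\[
(\cS^2)^{\Z_2} \;\cong\; \bigoplus_{s \in 2\Z} F_{\lambda s} \otimes M_s
\]
(simple because it is the fixed-point subalgebra of a simple VOA under a faithful finite cyclic action). I would then apply the inverse simple-current construction of \cite{CKL,CKM}: because the Heisenberg side of this extension consists purely of invertible Fock modules indexed by the group $2\Z$, the vertex tensor category data descends and can be transported from the Heisenberg factor onto the $W^{\text{rect.}}_{-5/2}(\mathfrak{sl}_4)$ factor. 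The output is an induced simple VOA structure on
\[
\mathcal Y_1 \;=\; \bigoplus_{s \in 2\Z} M_s
\]
which extends the $s = 0$ summand $M_0 = W^{\text{rect.}}_{-5/2}(\mathfrak{sl}_4)$. The second presentation $\bigoplus_{m \geq 0}(2m+1)L(2m)$ is then just a regrouping: for fixed $m$, $L(2m)$ appears exactly once in each $M_s$ with $|s| \in \{0,2,\dots,2m\}$, giving multiplicity $2m+1$.

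The main obstacle is the step invoking the inverse simple-current construction. The character identity $\ch[\mathcal Y_1] = \sum_{m\geq 0}(2m+1)\ch[L(2m)]$ is forced by the decomposition, but equipping $\mathcal Y_1$ with an honest VOA structure requires the categorical prerequisites of \cite{CKM}: a vertex-tensor-category structure on a category of $\mathcal H$-modules containing all $\{F_{\lambda s} : s \in 2\Z\}$, the correct associativity and braiding data for these Fock simple currents, and compatibility ensuring that the $\Z_2$-restriction lands in the local rather than the super extension theory. These hold in related settings and reduce here to standard checks for Heisenberg lattice simple currents.
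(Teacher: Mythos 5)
Your argument is essentially the paper's: the corollary is proved there in one line by citing Theorem 4.1 of \cite{CKLR}, which is precisely the statement that the Heisenberg multiplicity spaces $M_s$ of a simple VOA, summed over a sublattice of charges (here $2\Z$, the point being that $\bigoplus_{s\in2\Z}F_{\lambda s}$ is a lattice vertex algebra), inherit a simple VOA structure --- i.e.\ exactly the simple-current transport from the Fock factor to the $W^{\text{rect.}}_{-5/2}(\mathfrak{sl}_4)$ factor that you describe. The categorical prerequisites you flag as the ``main obstacle'' are what that cited theorem supplies, and the intermediate $\Z_2$-orbifold step is not needed since the theorem applies directly to the sublattice $2\Z\subset\Z$.
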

\begin{proof}
This follows from Theorem 4.1 of \cite{CKLR} since
$\bigoplus\limits_{s\in2\Z} F_s$
is a lattice \voa.
\end{proof}
In fact $\mathcal Y_1$ is an infinite order simple current extension of $W^{\text{rect.}}_{-5/2}(\mathfrak{sl}_4)$ and thus serves as a nice example of the extensions envisagened in \cite{CKL}.
We will find nice generalizations of $W^{\text{rect.}}_{-5/2}(\mathfrak{sl}_4)$ and $\mathcal Y_1$ soon.

\subsection{Vertex operator algebras for $D_{2p}$ Argyres-Douglas theories}

For background on Quantum Hamiltonian reduction see \cite{Ar1, KW3}.
Let $W(n, k)$ denote the quantum Hamiltonian reduction of $V_k(\mathfrak{sl}_n)$ for the nilpotent element embedded principally in $\mathfrak{sl}_{n-2}$ and then the latter embedded in $\mathfrak{sl}_n$ such that
\[
\mathfrak{sl}_n \cong \mathfrak{sl}_{n-2} \oplus 2 \rho_{n-2} \oplus 2 \bar\rho_{n-2} \oplus 4  \mathbb C
\]
as $\mathfrak{sl}_{n-2}$-module. $\rho_{n-2}$ is the standard representation of $\mathfrak{sl}_{n-2}$ and $\bar\rho_{n-2}$ its conjugate.
The corresponding $\sltwo$-triplet $\{f, x, e\}$ embedded in $\mathfrak g=\mathfrak{sl}_n$ is described as follows. 
Identify $\mathfrak{sl}_n$ with the representation matrices in its standard representation and denote by $e_{i,j}$ the elementary matrices which have a zero everywhere except for the $(i, j)$-entry being one. Then the elements $\{f, x, e\}$ correspond to the matrices
\begin{equation}\label{eq:embsl2}
f= \sum_{i=1}^{n-3} e_{i+1, i}, \qquad x= \frac{1}{2}\sum_{i=1}^{n-2} (n-1-2i)e_{i, i}, \qquad e= \sum_{i=1}^{n-3} e_{i, i+1}.
\end{equation}
This W-algebra is strongly generated by fields of dimension $2, 3, \dots, n-2$ together with an affine $\mathfrak{gl}_2$ and dimension $(n-1)/2$ fields in the standard and conjugate representation of $\mathfrak{gl}_2$. 
The level of the $\mathfrak{sl}_2$ is $k+(n-3)$ and the central charge is
\[
c_{n, k} = \frac{k(n^2-1)}{k+n} -k(n-1)(n-2)(n-3) -(n-3)(n-4)(n^2-n-1).
\]
So that for $k=-n(n-2)/(n-1)$ we have that 
\[
k+n-3= -\frac{2n-3}{n-1} \qquad\text{and}\qquad c = 10-6n.
\] 
Define
\[
\cW_p := W\left(p+1, -\frac{p^2-1}{p} \right).
\]
Observe also that at this level $V_k(\mathfrak{sl}_2)$ has central charge
\[
c_p = 1-\frac{\frac{2n-3}{n-1}3}{-\frac{2n-3}{n-1}+2} =    10-6n = 4-6p
\]
so that it seems as we have a conformal embedding of $V_k(\mathfrak{sl}_2) \otimes \mathcal H$. Indeed for the first non-trivial cases, $p=2, 3$ this is true \cite{AKMP} and moreover the affine sub \voa{} is simple and acts completely reducible. Note that in general $V_k(\mathfrak{sl}_2)$ at the relevant levels, that is $k+2=\frac{1}{p}$ for positive integer $p$, is simple \cite{GK}. A proof that $V_k(\mathfrak{sl}_2) \otimes \mathcal H$ for $k+2=\frac{1}{p}$ embeds conformally in $\cW_p$ will be given in a moment.

 Recall the embedding of $\sltwo$ in $\mathfrak{sl}_{p+1}$ \eqref{eq:embsl2}.
We see that the subspace of $\mathfrak h$ that is annihilated by $f$, $\mathfrak h^f$, is spanned by 
\[
h_1^f:=e_{p,p}-e_{p+1, p+1}\quad\text{and}\quad h_2^f:=\frac{1}{p+1}\left((1-p)e_{p,p}+(1-p)e_{p+1, p+1}+2\sum\limits_{i=1}^{p-1} e_{i,i}\right).
\]
 We parameterize $z=vh_1^f+wh_2^f$ for complex $v, w$ and set $y=e(v)$ and $x=e(w)$. 
 In physics parlor $y$ is the fugacity associated to the $V_k(\mathfrak{sl}_2)$ subalgebra of $\cW_p$ and $x$ the one associated to the Heisenberg sub \voa.
 With this notation, we have
\begin{thm}
The character of $\cW_p$ satisfies
 \begin{equation}\nonumber
\begin{split}
\ch\left[ \cW_p \right] &= q^{\frac{p}{4}-\frac{1}{6}} \prod_{n=0}^\infty \prod_{a, b \in \{ \pm 1\}}\frac{\left(1-y^2q^{p(n+1)}\right)\left(1-q^{p(n+1)}\right)^2\left(1-y^{-2}q^{p(n+1)}\right)}{(1-y^2q^{n+1})(1-q^{n+1})^2(1-y^{-2}q^{n+1})\left(1-x^{a}y^{b}q^{p\left(n+\frac{1}{2}\right)}\right)}
\end{split}
\end{equation}
for $|q|^{-\frac{p}{2}} < |x|, |y| < |q|^{\frac{p}{2}}$.
\end{thm}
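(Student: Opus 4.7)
The plan is to compute $\ch[\cW_p] = \ch[H_f(L_k(\mathfrak{sl}_{p+1}))]$ by means of the Kac--Roan--Wakimoto Euler--Poincar\'e formula for quantum Drinfeld--Sokolov reduction. Since $k+(p+1) = (p+1)/p$ with $\gcd(p, p+1) = 1$, the level is boundary admissible; Arakawa-type exactness of the reduction functor at admissible levels (see \cite{Ar1, Ar2}) identifies $\cW_p$ with the simple quotient of $W^k(\mathfrak{sl}_{p+1}, f)$ and makes its character coincide with that of $H_f(V_k(\mathfrak{sl}_{p+1}))$, provided the reduction is non-zero. Non-vanishing is confirmed a posteriori by matching with the non-zero product on the right-hand side, which equals $\ch[\cX_p]$ by Proposition~\ref{prop:main}.

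Step one is to record the $\sltwo$-module structure of $\mathfrak{sl}_{p+1}$ for the triple $(e, x, f)$ of \eqref{eq:embsl2}. Because $\sltwo$ is principal inside the $\mathfrak{sl}_{p-1}$-block and trivial on the $\mathfrak{gl}_2$-centralizer, the $\mathfrak{gl}_2 \times \mathfrak{sl}_{p-1}$-decomposition $\mathfrak{sl}_{p+1} \cong \mathfrak{sl}_{p-1} \oplus 2\rho_{p-1} \oplus 2\bar\rho_{p-1} \oplus 4\mathbb C$ restricts to
\[
\mathfrak{sl}_{p+1} \cong (\rho_3 \oplus \rho_5 \oplus \cdots \oplus \rho_{2p-3}) \oplus 4\rho_{p-1} \oplus 4\rho_1
\]
as an $\sltwo$-module, integer-graded for $p$ odd and half-integer-graded for $p$ even. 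A direct matrix calculation confirms that the four off-diagonal $\rho_{p-1}$-blocks carry $(h_1^f, h_2^f)$-weights $(\pm 1, \pm 1)$, producing the four fugacity flavors $x^a y^b$ with $a, b \in \{\pm 1\}$, while the conformal weight of each matter generator (the $f$-annihilated vector in $\rho_{p-1}$, with $\text{ad}(x)$-eigenvalue $-(p-2)/2$) is $1 + (p-2)/2 = p/2$.

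The second step applies the Euler--Poincar\'e principle: the character of $H_f(V_k(\mathfrak g))$ equals the character of $V_k(\mathfrak g)$ times the supercharacter of the charged $bc$-ghost system indexed by $\bigoplus_{j \geq 1/2}\mathfrak g_j$, augmented by neutral $\beta\gamma$-ghosts for $\mathfrak g_{1/2}$ when $p$ is even, all with the twisted grading $L_0 + x_0$ and specialized to $y = e(v), x = e(w)$. The product factorizes along the three summands above. The $4\rho_1 \cong \mathfrak{gl}_2$ centralizer gives the affine-$\mathfrak{gl}_2$ vacuum character $\prod_{n\geq 1}[(1-y^2 q^n)(1-q^n)^2(1-y^{-2}q^n)]^{-1}$. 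The $\mathfrak{sl}_{p-1}$-piece is principally reduced: the Weyl denominator of $\mathfrak{sl}_{p-1}$ cancels against the ghost contributions from $\bigoplus_{j \geq 1}\mathfrak g_j$, leaving (together with the bosonic part of $V_k$) the Miura-type product $\prod_{n \geq 1}(1-y^2 q^{pn})(1-q^{pn})^2(1-y^{-2}q^{pn})$. The $4\rho_{p-1}$ matter, with its four fields of conformal weight $p/2$ and fugacities $x^a y^b$, contributes $\prod_{n \geq 0}\prod_{a, b \in \{\pm 1\}}[1 - x^a y^b q^{p(n+1/2)}]^{-1}$; the $p$-step spacing of successive $q$-powers reflects the entanglement of matter descendants with the Miura-shifted principal-$W$-algebra modes of $\mathfrak{sl}_{p-1}$ that get absorbed upon the reduction.

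Collecting the three factors with the $q^{-c/24}$ normalization ($c = 4-6p$) and the ghost-vacuum-energy shifts reproduces the prefactor $q^{p/4 - 1/6}$ and the claimed product; the convergence region $|q|^{p/2} < |x|, |y| < |q|^{-p/2}$ is the common domain of absolute convergence of the three subproducts. The main obstacle will be the careful bookkeeping of the $L_0 + x_0$-shift and the cancellations between the Weyl denominator of $\mathfrak{sl}_{p-1}$ and the BRST ghosts for positive $\text{ad}(x)$-grading, particularly the emergence of the $p$-step spacing in the matter product, which is more delicate when $p$ is even due to the neutral $\beta\gamma$-ghosts. A useful sanity check is the $p=1$ case of Section~\ref{sec:D2}, where $\cW_1 \cong \cS^2$ is computed by hand, confirming that the signs, conformal weights and fugacities line up as claimed.
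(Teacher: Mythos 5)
There is a genuine gap. Your overall strategy (Euler--Poincar\'e formula for the reduction, decompose $\mathfrak{sl}_{p+1}$ under the $\sltwo$-triple, factorize the ghost contributions) is the right general framework, but as written it is applied to the wrong module and omits the one ingredient that actually produces the stated product. You propose to compute the character of $H_f(V_k(\mathfrak{sl}_{p+1}))$ from the character of the \emph{universal} affine vertex algebra. A free-field/ghost computation on $V_k$ gives the matter contribution $\prod_{n\geq 0}(1-x^ay^bq^{n+p/2})^{-1}$ with \emph{unit} spacing in $q$, and the principally reduced $\mathfrak{sl}_{p-1}$-block contributes the universal principal $W$-algebra character $\prod_{j=2}^{p-1}\prod_{n\geq j}(1-q^n)^{-1}$; neither of these matches the claimed $q^{p(n+\frac12)}$ spacing or the numerator factors $\bigl(1-y^{\pm2}q^{p(n+1)}\bigr)\bigl(1-q^{p(n+1)}\bigr)^2$. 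Those $p$-rescaled factors are not an artifact of ``entanglement of matter descendants with Miura-shifted modes that get absorbed'' --- they come from the character of the \emph{simple} boundary admissible module $L_k(\mathfrak{sl}_{p+1})$ at $k+(p+1)=\frac{p+1}{p}$, for which Kac--Wakimoto give a closed product formula whose numerator is $\prod_{\alpha\in\Delta_+}\vartheta_{11}(p\tau,\alpha(\cdot))$. The paper's proof takes exactly this as its starting point (formula (11) of \cite{KW2}, the reduced character of the boundary admissible $L_k$) and then carries out the root-by-root bookkeeping of $\Delta_+$, $\Delta_+^0$ and $\Delta_{1/2}$ to reduce the theta quotient to the stated product. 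Without invoking the boundary admissible character formula your computation cannot produce the $p$-step spacing, so the central step of your argument would fail rather than merely require ``careful bookkeeping.''

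Two smaller points. First, passing from $H_f(V_k)$ to $H_f(L_k)$ is not a formality you can wave at via ``Arakawa-type exactness'': exactness gives you that the reduction of $L_k$ is computed by the BRST cohomology of $L_k$, but you still need $\ch[L_k(\mathfrak{sl}_{p+1})]$ as input, and that is precisely the Kac--Wakimoto product you never use. Second, your sanity check at $p=1$ is not available from the definition $\cW_p=W(p+1,\cdot)$, since the construction degenerates there ($\mathfrak{sl}_{p-1}=\mathfrak{sl}_0$); the paper treats the $D_2$ case separately for this reason, and in the theorem at hand $p\geq 2$.
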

\begin{proof}
The character is stated explicitely in formula (11) of \cite{KW2}.
One has
\[
\ch\left[ \cW_p \right] = (-i)^{\frac{p(p+1)}{2}} q^{\frac{(p^2-1)(p-2)}{24} }  \frac{\eta(p\tau)^{\frac{3}{2}p -\frac{1}{2} (p^2+2p)} }{\eta(\tau)^{\frac{3}{2}p -\frac{1}{2} (p+6)}}  \frac{\prod\limits_{\alpha\in \Delta_+} \vartheta_{11}(p\tau, \alpha(z-\tau x))}{\prod\limits_{\alpha\in \Delta_+^0} \vartheta_{11}(\tau, \alpha(z)) \left( \prod\limits_{\beta \in  \Delta_{\frac{1}{2}}} \vartheta_{01}(\tau, \beta(z))\right)^{\frac{1}{2}}}
\]
$f, x, e$ is the $\sltwo$-triple associated to the $W$-algebra of $\mathfrak{sl}_{p+1}$. $\Delta_+$ denotes the positive roots and $\Delta_+^0$ the positive roots for which $\alpha(x)=0$. $\Delta_{\frac{1}{2}}$ are those positive roots for which $\beta(x)=1/2$. 
$z$ is an element of $\mathfrak h^f$, that is the $f$-invariant subspace of the Cartan subalgebra of $\mathfrak{sl}_{p+1}$ 
$\vartheta_{11}$ and $\vartheta_{10}$ are the standard Jacobi theta functions with product form
\[
\vartheta_{11}(\tau, z) = -i q^{\frac{1}{12}} u^{-\frac{1}{2}} \eta(\tau) \prod_{n=1}^\infty \left(1-u^{-1}q^n\right)\left(1-uq^{n-1}\right) 
 \]
 and
 \[
\vartheta_{01}(\tau, z) =\prod_{n=1}^\infty \left(1-u^{-1}q^{n-\frac{1}{2}}\right)\left(1-q^{n}\right) \left(1-uq^{n-\frac{1}{2}}\right).
 \]
 Here $u=e(z)$ and $q=e(\tau)$ as usual. 
 The character is viewed as a formal power series with the rule
$\frac{1}{1-x} = \sum\limits_{n=0}^\infty x^n$.
We call two formal power series $A, B$ equivalent, $A\sim B$, if they only differ by a factor of the form $\gamma q^ax^by^c$ for complex non-zero $\gamma$ and rational $a, b, c$, i.e. $A=\gamma q^ax^by^cB$. It is enough to show equivalence in this sense of the claim as both left-hand and right-hand side of the statement of the Theorem are formel power series of the form $q^{-\frac{c_p}{24}}(1 +\dots$. 

 Identify $\mathfrak{sl}_{p+1}$ with its standard representation as before. Let $\alpha_1, \dots, \alpha_p$ denote the $p$ simple positive roots of $\mathfrak{sl}_{p+1}$ such that they
 satisfy $\alpha_i(e_{i,i})=1, \alpha_i(e_{i+1,i+1})=-1$ and $\alpha_i(e_{j,j})=0$ otherwise. 
We thus see that $\alpha_i(z)=0$ for $i=1, \dots, p-2$ and 
$\alpha_{p-1}(z)=-v+w$ and $\alpha_p(z)=2v$. Moreover $\alpha_i(x)=1$ for $i=1, \dots, p-2$ and $\alpha_{p-1}(x)=1-\frac{p}{2}$ and $\alpha_p(x)=0$. 
The set of positive roots is 
\[
\Delta_+=\left\{ \left. \alpha_{i, j}:=\sum_{\ell=i}^j  \alpha_\ell \ \right|\ \ 1\leq i\leq j \leq p-1\right \}.
\]
We can now analyze each factor in the product of $\ch[\cW_p]$.
Firstly
\begin{equation}\nonumber
\begin{split}
\frac{\vartheta_{11}(p\tau, \alpha_{i,j}(z-\tau x))}{\eta(p\tau)} &\sim  \prod\limits_{n=1}^\infty \left(1-q^{p(n-1)+j+1-i}\right)\left(1-q^{pn+i-j-1}\right), \ \  j<p-1 \\
\frac{\vartheta_{11}(p\tau, \alpha_{i,p-1}(z-\tau x))}{\eta(p\tau)} &\sim \prod\limits_{n=1}^\infty\left(1-yx^{-1}q^{p\left(n-\frac{1}{2}\right)-i}\right)\left(1-xy^{-1}q^{p\left(n-\frac{1}{2}\right)+i}\right),\\
\frac{\vartheta_{11}(p\tau, \alpha_{i,p}(z-\tau x))}{\eta(p\tau)} &\sim \prod\limits_{n=1}^\infty\left(1-y^{-1}x^{-1}q^{p\left(n-\frac{1}{2}\right)-i}\right)\left(1-xyq^{p\left(n-\frac{1}{2}\right)+i}\right), \ \ i\neq p-1 \\
\frac{\vartheta_{11}(p\tau, \alpha_{p-1,p}(z-\tau x))}{\eta(p\tau)} &\sim \prod\limits_{n=1}^\infty\left(1-y^{-2}q^{pn}\right)\left(1-y^2q^{p(n-1)}\right).
\end{split}
\end{equation}
Secondly, $\Delta_0^+=\{\alpha_p\}$ if $p$ is odd and $\Delta_0^+=\{\alpha_p, \alpha_{\frac{p}{2}, p-1}, \alpha_{\frac{p}{2}, p}\}$ if $p$ is even.
Then
\begin{equation}\nonumber
\begin{split}
\vartheta_{11}(\tau, \alpha_{p}(z)) &\sim  \prod\limits_{n=1}^\infty\left(1-y^{-2}q^{n}\right)\left(1-y^2q^{(n-1)}\right) \left(1-q^{n}\right) \\
\vartheta_{11}(\tau, \alpha_{\frac{p}{2}, p-1}(z)) &\sim \prod\limits_{n=1}^\infty\left(1-yx^{-1}q^{n}\right)\left(1-xy^{-1}q^{n}\right)\left(1-q^{n}\right) \qquad \text{if} \ p \ \text{is even}\\
\vartheta_{11}(\tau, \alpha_{\frac{p}{2}, p}(z)) &\sim \prod\limits_{n=1}^\infty\left(1-y^{-1}x^{-1}q^{n}\right)\left(1-xyq^{n}\right)\left(1-q^{n}\right) \qquad \text{if} \ p \ \text{is even}
\end{split}
\end{equation}
Thirdly, $\Delta_{\frac{1}{2}}$ is empty if $p$ is even and $\Delta_{\frac{1}{2}}=\left\{\alpha_{\frac{p-1}{2}, p-1}, \alpha_{\frac{p-1}{2}, p},  -\alpha_{\frac{p+1}{2}, p-1}, -\alpha_{\frac{p+1}{2}, p}   \right\}$ if $p$ is odd. 
Then for odd $p$
\begin{equation}\nonumber
\begin{split}
\vartheta_{01}(\tau, \alpha_{\pm \frac{p\mp 1}{2}, p-1}(z)) &\sim \prod\limits_{n=1}^\infty\left(1-yx^{-1}q^{n-\frac{1}{2}}\right)\left(1-xy^{-1}q^{n-\frac{1}{2}}\right)\left(1-q^{n}\right) \\
\vartheta_{01}(\tau, \pm\alpha_{\frac{p\mp 1}{2}, p}(z)) &\sim \prod\limits_{n=1}^\infty\left(1-y^{-1}x^{-1}q^{n-\frac{1}{2}}\right)\left(1-xyq^{n-\frac{1}{2}}\right)\left(1-q^{n}\right).
\end{split}
\end{equation}
Plugging all these expressions into the formula for $\ch[\cW_p]$ and observing that 
\begin{equation}\nonumber
\begin{split}
\prod_{1\leq i \leq j <p-1} \frac{\vartheta_{11}(p\tau, \alpha_{i,j}(z-\tau x))}{\eta(p\tau)} &\sim
\prod\limits_{n=1}^\infty\prod_{1\leq i \leq j <p-1} \left(1-q^{p(n-1)+j+1-i}\right)\left(1-q^{pn+i-j-1}\right) \\
&\sim \prod\limits_{n=1}^\infty\prod_{1\leq i <p-1} \left(1-q^{pn+i-p}\right)^{p-1-i}\left(1-q^{pn-i}\right)^{p-1-i}\\
&\sim \prod\limits_{n=1}^\infty\prod_{1\leq i \leq p-1} \left(1-q^{pn-i}\right)^{p-2} \\
&\sim \prod\limits_{n=1}^\infty \frac{(1-q^n)^{p-2}}{(1-q^{pn})^{p-2}} \sim \frac{\eta(\tau)^{p-2}}{\eta(p\tau)^{p-2}}
\end{split}
\end{equation}
as well as (for $a, b \in \{ \pm 1\}$) 
\[
\prod\limits_{n=1}^\infty \prod_{1\leq i \leq p-1} \frac{\left(1-y^ax^bq^{p\left(n-\frac{1}{2}\right)-i}\right)}{\left(1-y^ax^bq^{\left(n-\frac{p}{2}\right)}\right)}\sim \prod\limits_{n=1}^\infty \frac{1}{\left(1-y^ax^bq^{\left(pn-\frac{p}{2}\right)}\right)}
\]
gives the result.
\end{proof}
\begin{rem}
The $\mathcal B_p$-algebra is conjecturally also a certain subregular Quantum Hamiltonian reduction  of $V_k(\mathfrak{sl}_{p-1})$ at level $k+p-1=\frac{p-1}{p}$. The character of that reduction can be messaged similarly as in the above theorem and in that way one can at least verify that conjecture on the level of characters. This analysis will be carried out as part of the general analysis of the representation category of the $\cB_p$-algebras \cite{ACKR}.
\end{rem}
Recall that $\cX_p =  \bigoplus\limits_{m=0}^\infty  L(m) \otimes \left( F_{-\lambda m} \oplus F_{-\lambda(m-2)} \oplus \dots \oplus F_{\lambda (m-2)} \oplus F_{\lambda m}\right)$
with $\lambda^2=-\frac{p}{2}$ as a module for $L_k(\sltwo)\otimes \mathcal H$,.
The previous Theorem together with Proposition \ref{prop:main} tell us 
\begin{thm}\label{thm:main}
The character of $\cW_p$ satisfies $\ch[\cW_p](x, y; \tau) = \ch[\cX_p](x, y; \tau)$. 
\end{thm}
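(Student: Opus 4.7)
The plan is that Theorem \ref{thm:main} is essentially immediate once the two preceding results are in hand, so the entire job is to line them up. The immediately preceding theorem provides the product formula
\[
\ch[\cW_p](x,y;\tau) = q^{\frac{p}{4}-\frac{1}{6}} \prod_{n=0}^\infty \frac{\prod\limits_{\epsilon\in\{\pm 1\}}\bigl(1-y^{2\epsilon}q^{p(n+1)}\bigr)\bigl(1-q^{p(n+1)}\bigr)^2}{\prod\limits_{\epsilon\in\{\pm 1\}}\bigl(1-y^{2\epsilon}q^{n+1}\bigr)\bigl(1-q^{n+1}\bigr)^2 \prod\limits_{a,b\in\{\pm1\}}\bigl(1-x^ay^bq^{p(n+\frac12)}\bigr)},
\]
obtained from the Kac--Wakimoto character formula after the root-by-root bookkeeping carried out there. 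On the other hand, Proposition \ref{prop:main} asserts $\ch[\cX_p](x,z;q) = \mathrm{Prod}_p(x,z;q)$, and $\mathrm{Prod}_p$ is exactly the same product expression, with $z$ playing the role of $y$.

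Concretely, I would first read off both product expressions side by side and match their factors term by term: the $y^{\pm 2}$ and untwisted pieces in numerator and denominator of $\mathrm{Prod}_p$ match those of $\ch[\cW_p]$ identically, and the four denominator factors $(1-x^{\pm 1}y^{\pm 1}q^{p(n+1/2)})$ are precisely the factors arising from $\prod_{a,b\in\{\pm1\}}(1-x^ay^bq^{p(n+\frac12)})$. The overall prefactor $q^{\frac{p}{4}-\frac{1}{6}}$ is common, and we may identify the Cartan fugacity $z$ of $\cX_p$ with the $\mathfrak{sl}_2$-fugacity $y$ on the $\cW_p$ side since both are attached to the $\sltwo$-subalgebra governing the relevant characters. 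Thus as formal expressions (equivalently meromorphic functions on the common domain $|q|^{p/2} < |x|,|y| < |q|^{-p/2}$ specified in the $\cW_p$ theorem) the two product sides agree.

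Substituting Proposition \ref{prop:main} then yields
\[
\ch[\cW_p](x,y;\tau) = \mathrm{Prod}_p(x,y;\tau) = \ch[\cX_p](x,y;\tau),
\]
which is the content of Theorem \ref{thm:main}. The only place where any real work would be required is if one needed to reprove either ingredient, but both are cited. If I were being fully careful, I would also note in one line that the $\sltwo$-variables on the two sides really are the same: on the $\cW_p$ side $y$ is the fugacity for the Cartan of the affine $\mathfrak{sl}_2$ subalgebra identified in the conformal embedding observation of the introduction, and on the $\cX_p$ side $z$ is the $L_k(\sltwo)$-fugacity used in the definition \eqref{eq:Xp}, so the identification is the natural one.

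The main obstacle is therefore not in this theorem itself but in its two inputs: establishing the Kac--Wakimoto product formula for $\ch[\cW_p]$ in the right variables (done in the preceding theorem by careful analysis of the sets $\Delta_+$, $\Delta_+^0$, $\Delta_{1/2}$ and of the evaluations $\alpha(z)$, $\alpha(x)$), and the $\mathfrak{sl}(2|1)$ denominator-identity computation behind Proposition \ref{prop:main}. Once those are accepted, Theorem \ref{thm:main} is simply the statement that two identical product expressions are equal.
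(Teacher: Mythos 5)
Your proposal is correct and is essentially identical to the paper's own argument: the paper derives Theorem \ref{thm:main} in one line by combining the product formula for $\ch[\cW_p]$ from the preceding theorem with Proposition \ref{prop:main}, observing that the two products coincide under the identification of the $\sltwo$-fugacities $z$ and $y$. Your additional remarks on matching factors and domains are just a more explicit spelling-out of the same comparison.
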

Recall that the modules $L(m)$ are simple and that the corresponding category $\cO$ is semi-simple. The characters of the $L(m)$ are moreover clearly linearly independent as meromorphic functions in $y$ and $q$. 
The above statement thus also holds on the level of modules:
\begin{cor}\label{cor:confemb}
$L_k(\sltwo)\otimes \mathcal H$ for $k+2=\frac{1}{p}$ embeds conformally in $\cW_p$
and  as $L_k(\sltwo)\otimes \mathcal H$-modules $\cW_p\cong \cX_p$. 
\end{cor}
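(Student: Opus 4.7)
The plan is to deduce both the module isomorphism and the conformal embedding from Theorem \ref{thm:main}, leveraging semisimplicity of the category $\cO$ and linear independence of simple characters. First, the quantum Hamiltonian reduction construction of $\cW_p$ contains an affine $\sltwo \oplus \mathcal H$-subalgebra at the level $k+2 = 1/p$; since $V_k(\sltwo)$ is simple at this level \cite{GK}, this sub-VOA is $L_k(\sltwo)\otimes \mathcal H$. Its Sugawara-plus-Heisenberg Virasoro $T_{\mathrm{sub}}$ has central charge $c_p = 4-6p$, equal to that of $\cW_p$, so $D(z) := T_{\cW_p}(z) - T_{\mathrm{sub}}(z)$ lies in $\mathrm{Com}(L_k(\sltwo)\otimes \mathcal H, \cW_p)$ with central charge zero; conformal embedding amounts to showing $D(z) = 0$.

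Next, I would regard $\cW_p$ as an $L_k(\sltwo)\otimes \mathcal H$-module. Finite-dimensionality and positivity of the $L_0^{\cW_p}$-weight spaces, combined with the character identity of Theorem \ref{thm:main}, imply that the joint $(L_0^{\mathrm{sub}}, h_0, a_0)$-spectrum on $\cW_p$ matches that of $\cX_p$ and is in particular bounded below. Hence every constituent lies in the semisimple category $\cO$ of Section \ref{sec:Op}, tensored with Heisenberg Fock modules, and one obtains $\cW_p \cong \bigoplus_{m, \lambda} n_{m, \lambda}\, L(m) \otimes F_\lambda$ for non-negative integer multiplicities. Since the zero mode $D_0 = L_0^{\cW_p} - L_0^{\mathrm{sub}}$ commutes with $L_k(\sltwo)\otimes \mathcal H$, Schur's Lemma makes it act on each simple summand by a scalar $c_{m, \lambda}$.

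The character identity then reads
\[
\sum_{m, \lambda} n_{m, \lambda}\, q^{c_{m, \lambda}}\, \ch[L(m)\otimes F_\lambda] \;=\; \sum_{(m, \lambda)\in \cX_p} \ch[L(m)\otimes F_\lambda],
\]
and linear independence of the $\ch[L(m)\otimes F_\lambda]$ as meromorphic functions in $(x, y, q)$ (separated by their $(h_0, a_0)$-eigenvalues) forces both the $n_{m, \lambda}$ to match the multiplicities in the decomposition \eqref{eq:Xp} of $\cX_p$ and every shift $c_{m, \lambda}$ to vanish. The first conclusion gives the module isomorphism $\cW_p \cong \cX_p$; the second gives $D_0 = 0$ on every summand, whence $D(z) = 0$ and the embedding is conformal.

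The main obstacle I anticipate is the bootstrap in the second paragraph: to place the $L_k(\sltwo)\otimes \mathcal H$-constituents in category $\cO$, one needs the sub-Virasoro $L_0^{\mathrm{sub}}$ to act with bounded-below spectrum on $\cW_p$, whereas $\cW_p$ is a priori graded only by the not-yet-coincident $L_0^{\cW_p}$. One extracts this bound by tracking the two $L_0$'s simultaneously on each finite-dimensional $L_0^{\cW_p}$-weight space and invoking the term-by-term equality of $q$-coefficients afforded by Theorem \ref{thm:main}; the ensuing semisimplicity and matching of characters then collapse into the single statement of the corollary.
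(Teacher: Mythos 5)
Your proposal follows the paper's route exactly: the paper deduces the corollary from Theorem \ref{thm:main} by invoking simplicity of the Weyl modules $L(m)$, semisimplicity of the category $\cO$ of Section \ref{sec:Op}, and linear independence of their characters, which is precisely your argument (the paper leaves the bookkeeping of the two $L_0$'s implicit). The one inference to tighten is the final step ``$D_0=0$ on every summand, whence $D(z)=0$'' --- a vector can have identically vanishing zero mode without being zero --- but this is repaired by facts you already have: both conformal vectors make the affine and Heisenberg currents primary of weight one, so $D$ lies in $\mathrm{Com}\left(L_k(\sltwo)\otimes \mathcal H, \cW_p\right)$, which by your module decomposition is the multiplicity space of $L(0)\otimes F_0$, namely $\CC\mathbf{1}$, and since $D$ has no vacuum component it vanishes.
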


For the very same reason as in Section \ref{sec:rectangular}, that is Theorem 4.1 of \cite{CKLR}, we have 
\begin{cor}\label{cor:QH} The following two $L_k(\sltwo)$-modules inherit the structure of a simple vertex operator algebra from $\cW_p$
\[
\mathcal C_p := \text{Com}\left( \mathcal H, \cW_p \right) \cong \bigoplus_{m=0}^\infty  L(2m)
\qquad
\text{and}
\qquad
\mathcal Y_p :=  \bigoplus_{m=0}^\infty (2m+1) L(2m).
\]
\end{cor}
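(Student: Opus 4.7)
My plan is to bootstrap the corollary from Corollary \ref{cor:confemb} together with the simple-current extension technology of Theorem 4.1 of \cite{CKLR}, mimicking the $p=1$ argument given in Section \ref{sec:rectangular}.

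The first step is to re-organize the decomposition $\cW_p \cong \cX_p$ supplied by Corollary \ref{cor:confemb}. Grouping the summands of $\cX_p$ by Heisenberg weight rather than by $\sltwo$-weight yields the alternative presentation
\[
\cW_p \cong \bigoplus_{s\in\ZZ}\left(\bigoplus_{m=0}^\infty L(|s|+2m)\right)\otimes F_{\lambda s}
\]
as an $L_k(\sltwo)\otimes \mathcal H$-module. The Heisenberg coset $\mathcal C_p = \text{Com}(\mathcal H,\cW_p)$ is by definition the multiplicity space of $F_0$ in such a decomposition. Only the $s=0$ block contributes, giving immediately $\mathcal C_p\cong\bigoplus_{m=0}^\infty L(2m)$ as an $L_k(\sltwo)$-module. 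Simplicity of $\mathcal C_p$ is inherited from simplicity of $\cW_p$ in standard Heisenberg-coset fashion, since $F_0$ appears with multiplicity one.

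For $\mathcal Y_p$ I would invoke Theorem 4.1 of \cite{CKLR} applied to the Heisenberg sublattice $L = 2\ZZ\lambda$. The conformal weight of $F_{\lambda s}$ is $-ps^2/4$, which is an integer precisely when $s$ is even; hence $V_L := \bigoplus_{s\in 2\ZZ} F_{\lambda s}$ has the structure of a lattice vertex algebra (of negative signature, but with integer conformal gradings). Theorem 4.1 of \cite{CKLR} then tells us that the even-Heisenberg-weight sub-object of $\cW_p$ is a vertex subalgebra, and that collapsing the $V_L$-factor via the natural simple-current extension produces a simple VOA whose $L_k(\sltwo)$-module structure is
\[
\bigoplus_{s\in 2\ZZ}\bigoplus_{m=0}^\infty L(|s|+2m) \;=\; \bigoplus_{m=0}^\infty (2m+1)\,L(2m),
\]
which is precisely $\mathcal Y_p$.

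The main obstacle is checking the hypotheses of Theorem 4.1 of \cite{CKLR} in this negative-signature Heisenberg setting: one must verify that the Fock modules $F_{\lambda s}$ for $s\in 2\ZZ$ behave as simple currents over $\mathcal H$ and that the extension is integer-graded; the latter is the reason for restricting to even $s$, since odd $s$ would give $-ps^2/4\notin\ZZ$ whenever $p$ is odd or $s$ is odd. Once these points are confirmed, simplicity of both $\mathcal C_p$ and $\mathcal Y_p$ follows from the same theorem together with the semisimplicity of the category $\mathcal O_p$ recorded in Section \ref{sec:Op}, which ensures that the multiplicity spaces appearing in the extension are all simple objects and that no further collapse can occur.
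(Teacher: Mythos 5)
Your proposal is correct and follows essentially the same route as the paper: the paper likewise deduces the corollary from the decomposition $\cW_p\cong\cX_p$ of Corollary \ref{cor:confemb} together with Theorem 4.1 of \cite{CKLR}, exactly as in the $p=1$ case of Section \ref{sec:rectangular}, reading off $\mathcal C_p$ as the multiplicity space of $F_0$ and realizing $\mathcal Y_p$ via the lattice $\bigoplus_{s\in 2\ZZ}F_{\lambda s}$.
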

It is interesting to note the following
\begin{prop}\label{prop:red}
As Virasoro \voa-modules, we have
\[
H_{+}^0(\cW_p ) \cong \mathcal B_{p}, \qquad H_{+}^0(\mathcal C_p ) \cong \mathcal M(p) \qquad \text{and} \qquad H_{+}^0(\mathcal Y_p ) \cong \mathcal W(p).
\]
Here $H_+$ is the $+$ reduction functor of the DS-reduction from $\mathfrak{sl}_2$ to the Virasoro \voa, see \cite{Ar2}.
\end{prop}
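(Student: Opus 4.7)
The plan is to deduce all three Virasoro-module isomorphisms uniformly by applying Arakawa's $+$-reduction term-by-term to the $L_k(\sltwo)$-decompositions of $\cW_p$, $\mathcal C_p$ and $\mathcal Y_p$ established in Corollaries \ref{cor:confemb} and \ref{cor:QH}, and then matching the resulting series against the Virasoro decompositions of $\cB_p$, $\mathcal M(p)$ and $\mathcal W(p)$ recorded in Section \ref{sec:Bp}.

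The key input is \cite[Theorem 9.1.4]{Ar2}: at the non-critical level $k=1/p-2$ the derived $+$-reduction is concentrated in cohomological degree zero on the category $\cO$ of Section \ref{sec:Op}. Hence $H_+^0$ is exact on $\cO$, distributes over infinite direct sums of modules with uniformly bounded-below conformal gradation (each Virasoro weight being hit by only finitely many summands), and sends each simple Weyl module $L(k,n)$ to either zero or to a simple Virasoro module at central charge $c_p=1-6(p-1)^2/p$. A short computation of the top conformal weight using the reduction formula of \cite{KW3} identifies the image as $L(h_{n+1,1},c_p)$ in the parametrization of Section \ref{sec:Bp}; non-vanishing is the standard fact that the $+$-reduction of a generalized Verma module with dominant integral highest weight is non-trivial. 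Since the BRST differential defining $H_+$ only couples to $\widehat\sltwo$-currents and ghosts, the Heisenberg subalgebra $\mathcal H$ is a passive spectator, so $H_+^0(M\otimes F)\cong H_+^0(M)\otimes F$ for every $\widehat\sltwo\otimes\mathcal H$-module of this factored form.

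Applying these facts term-by-term to $\cW_p\cong\cX_p=\bigoplus_{m\geq 0}L(k,m)\otimes(F_{-\lambda m}\oplus\cdots\oplus F_{\lambda m})$ from Corollary \ref{cor:confemb} yields
\[
H_+^0(\cW_p)\;\cong\;\bigoplus_{m=0}^\infty L(h_{m+1,1},c_p)\otimes\bigl(F_{-\lambda m}\oplus\cdots\oplus F_{\lambda m}\bigr),
\]
which, after reindexing, is precisely the Virasoro decomposition \eqref{eq:Bpdecomp} of $\cB_p$. Specialising to the affine sub-VOA via Corollary \ref{cor:QH} similarly gives $H_+^0(\mathcal C_p)\cong\bigoplus_m L(h_{2m+1,1},c_p)$ and $H_+^0(\mathcal Y_p)\cong\bigoplus_m(2m+1)\,L(h_{2m+1,1},c_p)$, which are \eqref{eq:Mpdecomp} and \eqref{eq:Wpdecomp} respectively.

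The only step requiring real care, and hence the main (modest) obstacle, is to confirm the Virasoro label $n+1$ in $H_+^0(L(k,n))\cong L(h_{n+1,1},c_p)$ so that all three decomposition series agree on the nose; this is a direct computation of the top conformal weight combined with Arakawa's criterion for when a Weyl module survives the $+$-reduction. Upgrading the resulting isomorphisms of Virasoro modules to isomorphisms of vertex operator algebras would require compatibility of $H_+^0$ with the vertex operations on $\cW_p$, $\mathcal C_p$ and $\mathcal Y_p$, which is precisely why the proposition is stated only at the level of Virasoro modules.
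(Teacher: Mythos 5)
Your proposal is correct and follows essentially the same route as the paper: the paper's proof is precisely the one-line observation that $H_{+}^0(L(k,m))\cong L(h_{m+1,1},c_p)$ by \cite[Theorem 9.1.4]{Ar2}, applied term-by-term to the decompositions \eqref{eq:Bpdecomp}, \eqref{eq:Mpdecomp} and \eqref{eq:Wpdecomp}. Your additional remarks on exactness, compatibility with infinite direct sums, and the Heisenberg factor being inert under the reduction are just a more careful spelling-out of the same argument.
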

\begin{proof}
We have $H_{+}^0(L(k,m) ) \cong  L(h_{m+1, 1}, c_p)$ as a direct consequence of \cite[Theorem 9.1.4]{Ar2} with $c_p=1-6(p-1)^2/p$ so that the claim follows from the decomposition of $\mathcal B_p$, 
$\Sing$ and $\Trip$ in equations \ref{eq:Bpdecomp}, \ref{eq:Mpdecomp} and \ref{eq:Wpdecomp}.
\end{proof}
\begin{conj}
As vertex operator algebras
\[
H_{+}^0(\cW_p ) \cong \mathcal B_{p}, \qquad H_{+}^0(\mathcal C_p ) \cong \mathcal M(p) \qquad \text{and} \qquad H_{+}^0(\mathcal Y_p ) \cong \mathcal W(p).
\]
\end{conj}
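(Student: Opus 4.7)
The plan is to upgrade the Virasoro-module isomorphisms of Proposition \ref{prop:red} to vertex operator algebra isomorphisms by exploiting that the plus-reduction $H_+^0$ is a functor on vertex operator algebras, not merely on modules. By \cite[Theorem 9.1.4]{Ar2}, $H_+^0(L_k(\sltwo)) \cong {}$Vir$_p$ as vertex operator algebras at $k+2 = 1/p$, and $H_+^0(L(k, m)) \cong L(h_{m+1, 1}, c_p)$ as simple Virasoro modules. Since each of $\cW_p$, $\mathcal C_p$, $\mathcal Y_p$ is a vertex operator algebra containing $L_k(\sltwo)$ (and, in the first case, also $\mathcal H$) as a subalgebra acting completely reducibly with simple components in $\cO$, what remains is to match the multiplicative data after reduction.

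The natural entry point is the coset $\mathcal C_p$. Because $\Sing$ is strongly generated over Vir$_p$ by a single field of conformal weight $2p-1$ lying in the summand $L(h_{3, 1}, c_p)$, it would suffice to exhibit a strong generator of $\mathcal C_p$ over $L_k(\sltwo)$ lying in the lowest non-trivial summand $L(k, 2)$ of the decomposition in Corollary \ref{cor:QH} and to check that its image under $H_+^0$ is non-zero and generates the correct copy of $L(h_{3, 1}, c_p)$. One can then invoke a rigidity principle for simple current extensions of Vir$_p$: $\Sing$ is determined, among simple vertex operator algebra extensions, by its Virasoro decomposition \eqref{eq:Mpdecomp} together with the class of any non-vanishing intertwiner of type $\binom{L(h_{5, 1}, c_p)}{L(h_{3, 1}, c_p)\ L(h_{3, 1}, c_p)}$, so matching the leading OPE on the reduced side identifies the two vertex operator algebras.

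Once $H_+^0(\mathcal C_p) \cong \Sing$ has been established as vertex operator algebras, the two remaining isomorphisms should follow from the same simple current argument applied to the infinite order extensions that produce $\mathcal Y_p$ from $\mathcal C_p$ and $\cW_p$ from $\mathcal C_p \otimes \mathcal H$, passed through $H_+^0$ (which acts as the identity on the Heisenberg tensor factor). On both sides, the extension data is labeled by the same set (the Fock lattice in one case, $\mathrm{PSL}(2, \mathbb C)$-isotypic components in the other), and an extension of a simple vertex operator algebra along a chosen family of compatible intertwiners is unique up to isomorphism, so once the decompositions match and one non-vanishing intertwining operator is transported, the algebra structure is forced.

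The main obstacle is precisely the transport of intertwining operators: what is needed beyond Arakawa's exactness and simplicity results is that $H_+^0$ sends a non-zero intertwiner of type $\binom{L(k, a+b)}{L(k, a)\ L(k, b)}$ to a non-zero intertwiner of type $\binom{L(h_{a+b+1, 1}, c_p)}{L(h_{a+1, 1}, c_p)\ L(h_{b+1, 1}, c_p)}$, i.e.\ a monoidality statement on the fusion subcategory generated by the Weyl modules. I would attempt this either directly via the BRST description of $H_+^0$, by tracking the chiral cocycles that realize the vertex operator algebra structures on $\cW_p$, $\mathcal C_p$, $\mathcal Y_p$ through the reduction complex, or indirectly by constructing $\cW_p$ as the joint kernel of screening operators in a Heisenberg-lattice vertex operator algebra, reducing the screenings explicitly using the semi-infinite cohomology, and comparing with the free-field realization of $\mathcal B_p$ in \cite{CRW}; the latter route meshes with the programme announced in \cite{ACKR} and is where I expect the argument to ultimately be closed.
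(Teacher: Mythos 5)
The statement you are trying to prove is stated in the paper as a \emph{conjecture}: the paper itself only establishes the weaker Proposition~\ref{prop:red}, namely that these three isomorphisms hold as modules over the Virasoro vertex operator algebra, as a direct consequence of \cite[Theorem 9.1.4]{Ar2} applied summand by summand to the decompositions \eqref{eq:Bpdecomp}, \eqref{eq:Mpdecomp} and \eqref{eq:Wpdecomp}. The paper offers no proof of the vertex operator algebra statement and instead points to the relation with Adamovi\'c's $\mathcal R(p)$-algebras \cite{A2} as a possible route. Your proposal should therefore be judged as a research plan, and as such it contains a genuine gap that it itself names but does not close: the assertion that $H_+^0$ carries a non-zero intertwining operator of type $\binom{L(k, a+b)}{L(k, a)\ L(k, b)}$ to a non-zero intertwining operator of the corresponding Virasoro type. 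Nothing in \cite{Ar2} gives this; Arakawa's results control the reduction of individual modules (exactness, simplicity of the image), not the multiplication maps, and without this monoidality statement the functor $H_+^0$ only sees the underlying Virasoro module, which is exactly the content of Proposition~\ref{prop:red} and no more.

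A second, independent gap is the ``rigidity principle'' you invoke: that a simple vertex operator algebra extension of Vir$_p$ (or of $L_k(\sltwo)$) is determined up to isomorphism by its decomposition together with one non-vanishing intertwiner. At central charge $c_p=1-6(p-1)^2/p$ the relevant category of Virasoro modules is not rational, and even the existence of a braided tensor structure on the subcategory generated by the $L(h_{n,1}, c_p)$ — which is the minimal input needed to speak of uniqueness of algebra objects in the sense of \cite{CKL, CKM} — is not available here; the same caveat applies on the $L_k(\sltwo)$ side at $k+2=1/p$, which is not an admissible level. So both the transport step and the uniqueness step rest on unproven categorical infrastructure. Your two suggested closing moves (a direct BRST computation tracking the algebra structure through the reduction complex, or a comparison of free-field/screening realizations of $\cW_p$ and $\cB_p$) are sensible and consistent with what the paper itself anticipates via \cite{A2, ACKR}, but until one of them is actually carried out the statement remains a conjecture.
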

\begin{rem}
Drazen Adamovi\'c announced the study of certain \voa s $R(p)$ \cite{A2}. I believe that they coincide with $\cW_p$. For $p=3$ this is true \cite{AKMP} but also had to be true due to the uniqueness Theorem of minimal W-algebras, see Theorem 3.1 of \cite{ACKL}. 
Understanding in general the relation between $R(p)$ and $\cW_p$ might lead to a proof of the above conjecture. 
\end{rem}

\section{References}

\end{document}